\newtheorem{lemma}{Lemma}
\newtheorem{theorem}{Theorem}
\newtheorem{corollary}{Corollary}
\newtheorem{remark}{Remark}
\definecolor{sorange}{rgb}{0.8, 0.3, 0}
\definecolor{warnred}{rgb}{0.8, 0.15, 0.1}
\definecolor{darkgreen}{rgb}{0.0, 0.5, 0.0}
\begin{document}


\title{Simple Sufficient Criteria for Optimality of Entanglement Witnesses}



\author{Frederik vom Ende\,\orcidlink{0000-0002-2738-6893}}
\affiliation{Dahlem Center for Complex Quantum
Systems, Freie Universit\"at Berlin, Arnimallee 14, 14195 Berlin, Germany}

\author{Simon Cichy\,\orcidlink{0000-0002-9409-193X}}
\email{simon.cichy@fu-berlin.de}
\affiliation{Dahlem Center for Complex Quantum
Systems, Freie Universit\"at Berlin, Arnimallee 14, 14195 Berlin, Germany}


\date{\today}

\begin{abstract}
If one wants to establish optimality of a given bipartite entanglement witness, the current standard approach is to check whether it has the spanning property. Although this is not necessary for optimality, it is most often satisfied in practice, and for small enough dimensions or sufficiently structured witnesses this criterion can be checked by hand. In this work we introduce a novel characterization of the spanning property via entanglement-breaking channels, which in turn leads to a new sufficient criterion for optimality. This criterion amounts to just checking the kernel of some bipartite state. It is slightly weaker than the spanning property, but it is a lot easier to test for---by hand as well as numerically---and it applies to almost all witnesses which are known to have the spanning property. A second criterion is derived from this, where one can simply compute the expectation value of the given witness on a maximally entangled state. Finally, this approach implies new spectral constraints on witnesses as well as on positive maps.
\end{abstract}


\maketitle

\section{Introduction}
Entanglement is a core feature of quantum physics and a key resource in many quantum information tasks \cite{HHHH07,GT09}, hence detecting it is a most fundamental task. 
While deciding whether a given quantum state is entangled or separable is a notoriously hard problem \cite{Gurvits03,Gharibian10}, a powerful yet experimentally accessible criterion is to use entanglement witnesses.
Entanglement witnesses are Hermitian operators that produce non-negative expectation values on all separable states, so if the expectation value on some state is negative, then that state is guaranteed to be entangled.
Such witnesses have a close connection to Bell inequalities \cite{Terhal00,HGBL05}, and have been extended to continuous‐variable systems \cite{HE06} (where, for Gaussian states, one can actually parametrize all tangent hyperplanes to the separable covariance matrices) as well as more general infinite‐dimensional settings \cite{HG10}.  
More recently, this concept has been strengthened by the introduction of so-called ``mirrored entanglement witnesses'' \cite{BCH20,BBH23}.
Also for relations and quantitative bounds between entanglement witnesses and other entanglement measures, cf.~Ref.~\cite{EBA07}.

This concept of (linear) entanglement witnesses, quite naturally, leads to the following central question: When is a witness $W$ \textit{optimal}, in the sense that no witness can detect a larger set of entangled states than $W$ can? Not only are optimal witnesses the ``best detectors'' for entanglement (in the sense that they cannot be improved), the collection of them is also sufficient to detect all entangled states \cite{CS14}. 
Another key concept here is ``weak optimality'' which means that the hyperplane induced by a witness $W$ is tangent to the separable states (i.e., ${\rm tr}(\rho W)=0$ for some $\rho$ separable).
While this is a strictly weaker notion than optimality, it encodes the problem of deciding whether a given state is separable \cite{BHHA13}, further justifying the study of optimality and related notions. This also shows that certifying weak optimality of a witness $W$---or equivalently, computing the minimum ${\rm tr}(\rho W)$ over all separable states---is itself NP-hard;
for more on the computational complexity of the separability problem and related problems we refer to \cite{Ioannou07,Ioannou05_PhD}.

The most widely used sufficient condition for optimality of bipartite entanglement witnesses is the \textit{spanning property}: if the set of product vectors which have zero expectation value on $W$ do span the full Hilbert space, then $W$ is optimal \cite{LKCH00}. We will explore all of this in more detail in Sec.~\ref{sec_EW_pos}, but for now we stress that while the spanning property is not necessary for optimality, ``the optimality of [witnesses] without the spanning property is rather exceptional'' \cite{BSC23}.
However, verifying the spanning property in large dimensions or for unstructured witnesses quickly becomes intractable.

Another angle this can be approached from is via positive, but not completely positive maps $\Phi$ which act on an $n$-dimensional system. This is equivalent to the entanglement-witness framework via the Choi-Jamio\l{}kowski isomorphism.
If one adds trace-preservation as a constraint to $\Phi$ and then considers the few known maps of this type, one finds that many of them---e.g., the (rescaled) reduction map or the Breuer-Hall map---admit a trace (i.e., sum of eigenvalues) equal to $-n$ \cite{38054}. This is curious because it is \textit{not} explained by known spectral considerations (as they only yield $-n^2+2\leq{\rm tr}(\Phi)\leq n^2$ for any $\Phi$ positive, trace-preserving which is too weak as soon as $n\geq 3$).
On the other hand, for the positive trace-preserving maps for which the trace is equal to $-n$, the associated entanglement witness is known to be optimal.
Hence, the question: is there more to this (i.e., is this lower bound of $-n$ true and if so, does it have anything to do with optimality), or is this just a coincidence?

In this work we tackle both these problems.
As such, this paper is organized as follows: In Sec.~\ref{subsec_maps_reps} we recap some basic notions of channels and general linear maps, and Sec.~\ref{sec_EW_pos} is devoted to separability, entanglement witnesses, optimality, positive maps, and so on.
Then come our main results: In Sec.~\ref{sec_opt} we introduce sufficient criteria for optimality of bipartite entanglement witnesses. These criteria are derived from the spanning property; the core result here will be that
a witness $W$ has the spanning property if and only if it vanishes on a full-rank separable state. Again using Choi-Jamio\l{}kowski,
this translates to a new characterization of the spanning property involving entanglement-breaking channels (Thm.~\ref{thm_main_new}), so simply evaluating this on the particular entanglement-breaking channel $X\mapsto \frac{1}{\dim\mathcal H+1}(X+{\rm tr}(X){\bf1})$ leads to a sufficient criterion for optimality where one only has to check the kernel of one of two specific bipartite states (Thm.~\ref{thm:full_Schmidt_rank_to_optimal}). 
This in turn implies an even simpler criterion based on a new lower bound on the expectation value of a given witness on any maximally entangled state which, if saturated, guarantees optimality (Coro.~\ref{coro:trace_bound_from_max_entangled}).
Both new criteria of ours are a lot easier to use and test for than the spanning property---analytically as well as numerically---and yet the only witness we could find which (i)~has the spanning property but (ii)~which our criterion does \textit{not} detect as optimal is the flip in odd dimensions (Rem.~\ref{rem_kernel_weaker_than_spanning}~(i)).

Then in Sec.~\ref{sec_pos} we translate the aforementioned results into a new lower bound on the trace (entanglement fidelity) of positive maps which, if saturated, guarantees optimality of the associated witness (Thm.~\ref{thm_2}). This, in turn, leads to new bounds on the eigenvalues of witnesses (equivalently: of the Choi matrix of positive maps), cf.~Coro.~\ref{coro_wit_lambda}. These new necessary conditions for positivity of a linear map become useful, e.g., for falsifying properties like information backflow (P-divisibility) or Markovianity (CP-divisibility) \cite{Chrus22}.
Either way, throughout Sec.~\ref{sec_mainres} we explicitly show optimality of most known witnesses in a rather simple manner using our new criteria.
Finally, our conclusions as well as some follow-up questions are presented in Sec.~\ref{sec_concl}.

\section{Preliminaries \& Notation}\label{sec_prelim}

\subsection{Linear maps and their representations}\label{subsec_maps_reps}

Let us start by recalling some key concepts and establish some notation choices;
we refer to \cite{Watrous18} for more detail on anything mentioned in this section.
An operator $\rho\in\mathbb C^{n\times n}$ is positive semi-definite (denoted $\rho\geq 0$) if $\langle x|\rho|x\rangle\geq 0$ for all 
$x\in \mathbb C^n$. 
It is well-known that $\rho\geq 0$ if and only if ${\rm tr}(\rho\omega)$ for all $\omega\geq0$.
We call any positive semi-definite operator with trace $1$ a quantum state.
Also an operator is called positive definite---denoted $\rho>0$---if $\langle x|\rho|x\rangle>0$ for all $x\neq 0$. Moreover, for any bipartite operator in $\mathbb C^{m\times m}\otimes\mathbb C^{n\times n}$ 
we write ${\rm tr}_1(\cdot)\in\mathbb C^{n\times n}$, ${\rm tr}_2(\cdot)\in\mathbb C^{m\times m}$ for the usual partial trace over respective subsystem.

As for some common norms: The operator norm, i.e., the largest singular value, will be denoted $\|\cdot\|_\infty$. The trace norm, i.e., the sum of all singular values, we denote by $\|\cdot\|_1$. Moreover, the Hilbert-Schmidt inner product $\langle A,B\rangle_{\rm HS}:={\rm tr}(A^\dagger B)$ induces the Hilbert-Schmidt norm, denoted by $\|A\|_2:=\sqrt{\langle A,A\rangle_{\rm HS}}=\sqrt{{\rm tr}(A^\dagger A)}$. 

Next let us recall important classes of linear maps $\Phi:\mathbb C^{m\times m}\to\mathbb C^{n\times n}$.
First, $\Phi$ is called Hermitian-preserving if $\Phi(A)^\dagger=\Phi(A)$ for all $A\in \mathbb C^{m\times m}$ Hermitian; equivalently, $\Phi(X^\dagger)=\Phi(X)^\dagger$ for all $X\in\mathbb C^{m\times m}$.
Next, $\Phi$ is called positive if $\Phi(A)\geq 0$ for all $A\geq 0$. If ${\rm id}_k\otimes\Phi$ is positive for all $k\in\mathbb N$, then $\Phi$ is called completely positive. A completely positive map which is additionally trace-preserving is known as (quantum) channel.

Now for some convenient and well-known representations of such maps.
Any $\Phi:\mathbb C^{m\times m}\to\mathbb C^{n\times n}$ that is Hermitian-preserving can be decomposed as 
$
    \Phi = \sum_i \alpha_i K_i (\cdot) K_i^\dagger 
$
with $\alpha_i \in \mathbb R$ and $K_i \in\mathbb C^{m\times n}$ some set of orthogonal matrices under the Hilbert-Schmidt inner product, i.e., ${\rm tr} (K_i K_j^\dagger) \propto \delta_{ij}$; this is sometimes known as generalized Kraus decomposition. Moreover, $\Phi$ is completely positive if and only if $\alpha_i\geq 0$ for all $i$ in the above decomposition,
and $\Phi$ is trace-preserving if and only if $\sum_i \alpha_i K_i^\dagger K_i = \bf 1$.
Linear maps can also be represented through the Choi-Jamio\l{}kowski (henceforth CJ, for short) isomorphism
$\mathsf C(\Phi):=({\rm id}\otimes\Phi)(|\Gamma\rangle\langle\Gamma|)$, where $|\Gamma\rangle:=\sum_j|jj\rangle$ is ``the'' (unnormalized) maximally entangled state.
Its inverse maps operators $W \in \mathbb C^{m \times m} \otimes \mathbb C^{n \times n}$ to the superoperator $\mathsf C^{-1}(W) = {\rm tr}_1(((\cdot)^T \otimes {\bf 1})W)$ where, here and henceforth ${}^T$ is the usual transpose map. The use of this mapping is that it translates complete positivity into positivity of bipartite operators, i.e., $\Phi$ is completely positive if and only if $\mathsf C(\Phi)\geq 0$.
In this language the $\mathsf{FLIP}$ operator (or $\mathsf{SWAP}$ operator) $\mathbb F \in \mathbb C^{n \times m} \otimes \mathbb C^{m \times n}$, $\mathbb F = \sum_{i=1}^n\sum_{j=1}^m |i\rangle \langle j| \otimes |j \rangle \langle i|$---for the case $m=n$---is just $\mathsf C({}^T)$.
This operator has the important properties $\mathbb F(x\otimes y)=y\otimes x$ for all $x\in\mathbb C^m, y\in\mathbb C^n$, as well as $\mathbb F(A\otimes B)\mathbb F=B\otimes A$ for all $A\in\mathbb C^{m\times m}$, $B\in\mathbb C^{n\times n}$.

Another concept we need is the Choi rank of a linear map $\Phi$, which is defined as the rank of its Choi matrix $\mathsf C(\Phi)$.
If $\Phi$ is Hermitian-preserving then this is also known as Kraus rank because this number is also the smallest possible number of terms in any Kraus decomposition $= \sum_i \alpha_i K_i (\cdot) K_i^\dagger $ of $\Phi$. We say $\Phi$ has full Choi rank
if $\mathsf C(\Phi)$ is invertible (equivalently: $\ker(\mathsf C(\Phi))=\{0\}$). For completely positive $\Phi$, full Choi rank is equivalent to $\mathsf C(\Phi)$ being positive definite.

Finally, the (Hilbert-Schmidt) adjoint of a linear map $\Phi :\mathbb C^{m\times m}\to\mathbb C^{n\times n} $ is defined as the unique linear map $\Phi^\dagger :\mathbb C^{n\times n}\to\mathbb C^{m\times m} $
which for all $A,B$ satisfies
$
    \langle A, \Phi(B) \rangle_{\rm HS} = \langle \Phi^\dagger (A), B \rangle_{\rm HS} \, .
$
If $\Phi$ is Hermitian-preserving, then this is equivalent to ${\rm tr}(A\Phi(B))={\rm tr}(\Phi^\dagger(A)B)$ for all $A,B$, which is sometimes used to define the dual map.
The Kraus decomposition of the adjoint of a Hermitian-preserving map $\Phi$ is then $\Phi^\dagger = \sum_i \alpha_i K_i^\dagger (\cdot) K_i$. 
It is well known that $\Phi$ is Hermitian-preserving/positive/completely positive if and only if the same is true for $\Phi^\dagger$, and $\Phi$ is trace-preserving if and only if $\Phi^\dagger$ is unital, that is, $\Phi^\dagger({\bf1})={\bf1}$.

\subsection{Entanglement witnesses \& positive maps}\label{sec_EW_pos}

Next, let us recall well-known concepts about bipartite entanglement and its detection \cite{HHHH07,GT09,CS14}.
Given some $\rho\in\mathbb C^{m\times m}\otimes\mathbb C^{n\times n}$, $\rho\geq 0$ we say $\rho$ is separable if there exist $\ell\in\mathbb N$ and $\rho_1,\ldots,\rho_\ell\in\mathbb C^{m\times m}$, $\sigma_1,\ldots,\sigma_\ell\in\mathbb C^{n\times n}$ all positive semi-definite such that $\rho=\sum_{j=1}^\ell\rho_j\otimes\sigma_j$. Equivalently, $\rho=\sum_j|x_j\rangle\langle x_j|\otimes|y_j\rangle\langle y_j|$ for some $\{x_j\}_j\subset\mathbb C^m$, $\{y_j\}_j\subset\mathbb C^n$ \cite[Coro.~6.7]{Watrous18}. If $\rho$ is not separable we say $\rho$ is entangled.

This gives rise to so-called entanglement-breaking maps, that is, to completely positive maps $\Psi$ for which $({\rm id}_R\otimes\Psi)(\rho)$ (equivalently: $(\Psi\otimes{\rm id}_R)(\rho)$) is separable for all reference systems $R$ and all \mbox{$\rho\geq 0$} \cite{HSR03}. 
Indeed---similar to complete positivity---$\Psi$ is entanglement-breaking if and only if $\mathsf C(\Psi)$ is separable \cite[Prop.~6.22]{Holevo12ed2}.
If such $\Psi$ is additionally trace-preserving, then it is called ``entanglement-breaking channel''.
A prominent example here is the depolarizing 
channel \mbox{$\Psi_D:\mathbb C^{n\times n}\to\mathbb C^{n\times n}$}, \mbox{$X\mapsto (1-p)X+p\,{\rm tr}(X)\frac{\bf1}{n}$}, \mbox{$p\in[0,1]$} which is known to be entanglement-breaking for all $p\in[ \frac{n}{n+1} , 1 ]$ \cite[Prop.~6.40]{Holevo12ed2}.
In particular, for $p=\frac{n}{n+1}$ one has that
\begin{equation}\label{eq:EB_example}
\mathbb C^{n\times n}\ni X\mapsto \frac{1}{n+1}\big(X+{\rm tr}(X){\bf1}\big)
\end{equation}
is an entanglement-breaking channel (cf.~also \cite[Thm.~7.5.4]{Stormer13}).

Next, a convenient tool for detecting entanglement are entanglement witnesses. These are block-positive operators $W \in \mathbb C^{m\times m}\otimes\mathbb C^{n\times n}$ for which there exists $\sigma\geq 0$ such that ${\rm tr}(W\sigma)<0$; recall that $W$ is called block-positive if it satisfies 
     $
     \langle x\otimes y | W |x\otimes y \rangle \geq 0 
     $
for all
$x,y$. Another definition of block-positivity---which is readily seen to be equivalent via the spectral decomposition---is that ${\rm tr}(W(P\otimes Q))\geq 0$ for all $P,Q\geq 0$. This way one sees that all block-positive operators---and thus all entanglement witnesses $W$---satisfy ${\rm tr}(W\rho)\geq 0$ for all $\rho$ separable.
In particular, this shows that ${\rm tr}_1(W),{\rm tr}_2(W)\geq 0$ for all $W$ block-positive, because ${\rm tr}(P{\rm tr}_1(W))={\rm tr}(({\bf1}\otimes P)W)\geq 0$ for all $P\geq 0$. 
This is also why all $W$ block-positive satisfy ${\rm tr}(W)\geq 0$, with equality if and only if $W=0$.

Entanglement witnesses are well-known to have an equivalent formulation using (completely) positive maps via
CJ \cite[Sec. IV.B.3]{HHHH07}, \cite{GT09}: $W \in \mathbb C^{m\times m}\otimes\mathbb C^{n\times n}$ is block-positive (an entanglement witness) if and only if $W=\mathsf C(\Phi^\dagger)$ 
for some positive (but not completely positive) map $\Phi:\mathbb C^{n\times n}\to\mathbb C^{m\times m}$.
This connection shows that, equivalently, $W$ is block-positive if and only if $({\rm id}\otimes\Psi)(W)\geq 0$ for all entanglement-breaking channels $\Psi:\mathbb C^{n\times n}\to\mathbb C^{m\times m}$  \cite[Eq.~(16.62)]{Bengtsson17}.
Either way this yields the following famous characterization of separability \cite{HHH96,Stormer86}: 
\begin{lemma}
    Given any $\rho\in\mathbb C^{m\times m}\otimes\mathbb C^{n\times n}$ positive semi-definite, the following statements are equivalent.
    \begin{itemize}
        \item[(i)] $\rho$ is separable.
        \item[(ii)] $({\rm id}\otimes\Phi)(\rho)\geq 0$ for all $\Phi:\mathbb C^{n\times n}\to\mathbb C^{k\times k}$ positive (equivalently: $(\Phi^\dagger\otimes{\rm id})(\rho)\geq 0$ for all \mbox{$\Phi:\mathbb C^{k\times k}\to\mathbb C^{m\times m}$} positive) for all $k\in\mathbb N$.
        \item[(iii)] $({\rm id}\otimes\Phi)(\rho)\geq 0$ (equivalently: $(\Phi^\dagger\otimes{\rm id})(\rho)\geq 0$) for all positive maps $\Phi:\mathbb C^{n\times n}\to\mathbb C^{m\times m}$.
        \item[(iv)] ${\rm tr}(W\rho)\geq 0$ for all block-positive operators (resp., all entanglement witnesses) $W\in\mathbb C^{m\times m}\otimes\mathbb C^{n\times n}$.
    \end{itemize}
\end{lemma}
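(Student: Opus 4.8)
The plan is to run the cycle of implications (i)$\Rightarrow$(ii)$\Rightarrow$(iii)$\Rightarrow$(iv)$\Rightarrow$(i) for the first-stated form of each of the four conditions, and then recover the ``equivalently'' variants in (ii) and (iii) from the fact that separability is invariant under exchanging the two tensor factors (i.e.\ under conjugating by $\mathbb F$), combined with the standard fact that a linear map is positive if and only if its Hilbert--Schmidt adjoint is. Indeed, once (i)$\Leftrightarrow$(ii)$\Leftrightarrow$(iii)$\Leftrightarrow$(iv) is in place, the alternative form ``$(\Phi^\dagger\otimes{\rm id})(\rho)\geq 0$ for all positive $\Phi$'' is just the first form applied to $\mathbb F\rho\mathbb F$, which is separable exactly when $\rho$ is.

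The three ``forward'' implications are essentially computational. For (i)$\Rightarrow$(ii): writing a separable $\rho=\sum_j\rho_j\otimes\sigma_j$ with $\rho_j,\sigma_j\geq 0$, one gets $({\rm id}\otimes\Phi)(\rho)=\sum_j\rho_j\otimes\Phi(\sigma_j)\geq 0$ using positivity of $\Phi$, for every target dimension $k$. The implication (ii)$\Rightarrow$(iii) is merely the special case $k=m$. For (iii)$\Rightarrow$(iv): given a block-positive $W\in\mathbb C^{m\times m}\otimes\mathbb C^{n\times n}$, invoke the CJ characterization recalled above to write $W=\mathsf C(\Phi^\dagger)=({\rm id}\otimes\Phi^\dagger)(|\Gamma\rangle\langle\Gamma|)$ with $\Phi:\mathbb C^{n\times n}\to\mathbb C^{m\times m}$ positive (hence Hermitian-preserving); then the defining property of the adjoint gives ${\rm tr}(W\rho)={\rm tr}\big(({\rm id}\otimes\Phi^\dagger)(|\Gamma\rangle\langle\Gamma|)\,\rho\big)={\rm tr}\big(|\Gamma\rangle\langle\Gamma|\,({\rm id}\otimes\Phi)(\rho)\big)=\langle\Gamma|({\rm id}\otimes\Phi)(\rho)|\Gamma\rangle\geq 0$, where the final inequality is precisely (iii) (and the dimensions match since $({\rm id}_m\otimes\Phi)(\rho)\in\mathbb C^{m\times m}\otimes\mathbb C^{m\times m}$ and $|\Gamma\rangle\in\mathbb C^m\otimes\mathbb C^m$).

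The substantial step is (iv)$\Rightarrow$(i), which is a separating-hyperplane argument. The set of (unnormalized) separable operators is a closed convex cone in the real vector space of Hermitian operators on $\mathbb C^m\otimes\mathbb C^n$---closed because it is the conic hull of the compact set $\{|x\otimes y\rangle\langle x\otimes y|:\|x\|=\|y\|=1\}$. If $\rho$ were not separable, the separating-hyperplane theorem would supply a Hermitian $W$ with $\langle W,\tau\rangle_{\rm HS}\geq 0$ for every $\tau$ in the separable cone but $\langle W,\rho\rangle_{\rm HS}<0$. Evaluating the first condition on the rank-one product projectors shows $\langle x\otimes y|W|x\otimes y\rangle\geq 0$ for all $x,y$, i.e.\ $W$ is block-positive (and, since $\langle W,\rho\rangle_{\rm HS}<0$ with $\rho\geq 0$, it is even a genuine witness), contradicting (iv). Hence $\rho$ is separable.

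The main obstacle, modest as it is, lives in that last step: one must phrase the separation within the \emph{cone} of Hermitian operators (rather than the affine set of states) so that the separating functional corresponds to a hyperplane through the origin---otherwise the resulting $W$ need not be block-positive---and one must invoke closedness/compactness of the separable cone correctly. Everything else is routine once the conventions are fixed: the bookkeeping with the CJ isomorphism and the adjoint identity in (iii)$\Rightarrow$(iv), and the $\mathbb F$-conjugation reduction used to deduce the ``equivalently'' forms in (ii) and (iii) from the main chain.
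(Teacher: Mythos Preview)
The paper does not actually prove this lemma: it is stated as the ``famous characterization of separability'' with citations to \cite{HHH96,Stormer86} and no argument is given. Your proof is correct and follows the standard route (essentially the original Horodecki--Horodecki--Horodecki argument): the forward implications are computational, the key step (iv)$\Rightarrow$(i) is the separating-hyperplane argument applied to the closed separable cone, and the ``equivalently'' variants are recovered by $\mathbb F$-conjugation. The only place one might ask for slightly more care is the closedness of the separable cone---``conic hull of a compact set'' alone is not quite enough, but since the generating set of pure product projectors is compact and bounded away from $0$ (or, equivalently, since the set of separable \emph{states} is compact and one can normalize by trace), closedness follows; this is a routine detail.
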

\noindent At first glance (ii) may look like a more bloated and hence redundant version of (iii).
However, the framework of (iii) does not allow for such fundamental operations as the partial transpose (if $m\neq n$), which is what the varying dimension $k$ in (ii) takes care of.

While the set of all witnesses has the same entanglement detection power as the set of all positive maps, a single positive map $\Phi$ is always stronger than the induced witness $W=\mathsf C(\Phi^\dagger)$ in the sense that
\begin{equation}
\label{eq:witness_to_channel_inequality}
    {\rm tr}(W \rho) <0 \ \Rightarrow \ ({\rm id}\otimes\Phi)(\rho)\not\geq 0
\end{equation}
for all $\rho$, while the reverse implication does not hold for all states \cite{GT09}. Note that the ${}^\dagger$ in the witness has to be included so $W\rho$ is well defined, because else the dimensions do not match if $m\neq n$.
Now the reason why~\eqref{eq:witness_to_channel_inequality} holds is simple:
\begin{align*}
    0 > {\rm tr}(\mathsf C(\Phi^\dagger) \rho)  & 
    = {\rm tr}(({\rm id} \otimes\Phi^\dagger)(|\Gamma\rangle\langle\Gamma|) \rho) \\ & 
    = {\rm tr}(|\Gamma\rangle\langle\Gamma| ({\rm id} \otimes \Phi)(\rho)) \\ & 
    = \langle\Gamma| ({\rm id} \otimes \Phi)(\rho) |\Gamma\rangle \,,
\end{align*}
and $({\rm id} \otimes \Phi)(\rho)$ admitting a negative expectation value implies that it cannot be positive.
Equivalently, 
$
{\rm tr}(W^T \rho) <0 $ with $W=\mathsf C(\Phi^\dagger)$ implies $ (\Phi^\dagger\otimes{\rm id})(\rho)\not\geq 0
$
due to the readily verified identity
\begin{equation}\label{eq:Choi_adjoint}
    \big(\mathsf C(\cdot)\big)^T = \mathbb F\,\mathsf C\big((\cdot)^\dagger\big)\mathbb F\,.
\end{equation}
Despite all that, we want to stress that any individual positive-map criterion $({\rm id}\otimes\Phi)(\rho)\geq 0$---which is necessary for $\rho$ to be separable---is known to be equivalent to a continuous family of witnesses induced by the maps $\{X^\dagger\Phi(\cdot)X:X\}$ \cite{HHHH07,HE02}: More precisely, given some $\Phi:\mathbb C^{n\times n}\to\mathbb C^{m\times m}$ positive and $\rho\in\mathbb C^{m\times m}\otimes\mathbb C^{n\times n}$, $\rho\geq 0$ one has $({\rm id}\otimes\Phi)(\rho)\geq 0$ if and only if
\begin{align}
{\rm tr}\big( (X\otimes{\bf1})\mathsf C(\Phi^\dagger)(X^\dagger\otimes{\bf1}) \rho \big)\geq 0\label{eq:P_witness_family}
\end{align}
for all $X\in\mathbb C^{m\times m}$. The key here is the simple
identity
${\rm tr}( (X\otimes{\bf1})\mathsf C(\Phi^\dagger)(X^\dagger\otimes{\bf1}) \rho)=\langle{\rm vec}(X^T)|({\rm id}\otimes\Phi)(\rho)|{\rm vec}(X^T)\rangle$; here ${\rm vec}$ is the usual (column-) vectorization ${\rm vec}=({\bf 1}\otimes (\cdot))|\Gamma\rangle=((\cdot)^T\otimes{\bf1})|\Gamma\rangle$ which, in particular, satisfies the key identity ${\rm vec}(ABC)=(C^T\otimes A){\rm vec}(B)$ for all compatible matrices $A,B,C$ \cite[Ch.~4.2 ff.]{HJ2}.
Also note that not all $X$ are needed in~\eqref{eq:P_witness_family} but only those of full rank, i.e., $({\rm id}\otimes\Phi)(\rho)\geq 0$ if and only if~\eqref{eq:P_witness_family} holds for all $X$ of full rank (as can be seen via a simple continuity argument).

The final concept we recap is optimality of entanglement witnesses \cite{HHHH07,GT09}, cf.~also Fig.~\ref{fig:witness_optimality} below (beware footnote~%
\cite{footnote_fig1}%
):
Given a witness $W$ define the set \mbox{$D_W:=\{\rho:{\rm tr}(W\rho)<0\}$} of all states $\rho$ which $W$ detects. Then a witness $W'$ is said to be better than $W$ if it detects more entangled states, in the sense that $D_W\subsetneq D_{W'}$. Now a witness $W$ is called optimal if there exists no witness $W'$ better than $W$.
It is known~\cite[Thm.~1]{LKCH00},~\cite[Sec.~2.5.2]{GT09} that a witness $W$ is optimal if and only if there exists no $P\geq 0$, $P\neq 0$ such that $W-P$ is block-positive.
This implies the important fact that if any witness derived from a positive map is optimal, then so is the entire equivalent family of witnesses (cf.~Eq.~\eqref{eq:P_witness_family} ff.):
\begin{lemma}\label{lemma_optimal_family}
    Given any entanglement witness \mbox{$W\in\mathbb C^{m\times m}\otimes\mathbb C^{n\times n}$} the following statements are equivalent:
    \begin{itemize}
        \item[(i)] $W$ is optimal.
        \item[(ii)] There exists $Y\in\mathbb C^{m\times m}$ of full rank such that \mbox{$(Y\otimes{\bf1})W(Y^\dagger\otimes{\bf1})$} is optimal.
        \item[(iii)] For all $X\in\mathbb C^{m\times m}$ of full rank, $(X\otimes{\bf1})W(X^\dagger\otimes{\bf1})$ is optimal.
    \end{itemize}
\end{lemma}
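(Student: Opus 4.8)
The plan is to reduce the entire statement to one observation: conjugating a witness $W$ by an invertible operator of the form $X\otimes{\bf1}$ leaves both the witness property and optimality intact. Write $C_X(\cdot):=(X\otimes{\bf1})(\cdot)(X^\dagger\otimes{\bf1})$; for $X\in\mathbb C^{m\times m}$ of full rank this is a linear bijection on $\mathbb C^{m\times m}\otimes\mathbb C^{n\times n}$ with inverse $C_{X^{-1}}$. First I would check that, for $X$ of full rank, $C_X$ restricts to a bijection of the cone of positive semi-definite operators \emph{and} of the cone of block-positive operators onto themselves: for the former this is the usual congruence argument ($C_X(P)=(X\otimes{\bf1})P(X\otimes{\bf1})^\dagger\geq0$ whenever $P\geq0$, and $C_X(P)\neq0$ whenever $P\neq0$ since $X\otimes{\bf1}$ is invertible), and for the latter it follows from
\[
  \langle x\otimes y|\,C_X(W)\,|x\otimes y\rangle=\langle (X^\dagger x)\otimes y|\,W\,|(X^\dagger x)\otimes y\rangle\geq 0
\]
for all $x,y$, which holds because $X^\dagger\otimes{\bf1}$ maps product vectors to product vectors --- this is precisely where the tensor structure of the conjugation (rather than an arbitrary invertible one) enters. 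Since $C_X$ is a bijection of the PSD cone it also preserves the property of \emph{not} being positive semi-definite, and since a block-positive operator is an entanglement witness exactly when it fails to be positive semi-definite, $C_X(W)$ is a witness whenever $W$ is; hence ``optimal'' is a meaningful predicate for $C_X(W)$.

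The core step is then the claim: for $X$ of full rank, $W$ is optimal if and only if $C_X(W)$ is optimal. By the characterization of optimality recalled just above the lemma, $W$ fails to be optimal iff there is $P\geq0$, $P\neq0$ with $W-P$ block-positive. Given such a $P$, the operator $C_X(W)-C_X(P)=C_X(W-P)$ is block-positive while $C_X(P)\geq0$ and $C_X(P)\neq0$, so $C_X(W)$ is not optimal. The converse is obtained by applying this implication to the witness $C_X(W)$ with $X^{-1}$ in place of $X$ and using $C_{X^{-1}}\circ C_X={\rm id}$. This establishes the claim.

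Finally I would read off the three equivalences. Since $C_{\bf1}={\rm id}$, statement (iii) applied with $X={\bf1}$ immediately yields (i), while (i) together with the claim applied to every full-rank $X$ yields (iii); thus (i) $\Leftrightarrow$ (iii). Similarly (i) implies (ii) by choosing $Y={\bf1}$, and (ii) implies (i) by the converse half of the claim with $X=Y$; thus (i) $\Leftrightarrow$ (ii). Altogether (i), (ii) and (iii) are equivalent.

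I do not expect a genuine obstacle here. The only points that need a little care are (1) that $C_X$ is simultaneously a cone-automorphism of the positive semi-definite operators and of the block-positive operators, so that the obstruction $P$ to optimality can be transported in both directions, and (2) that $C_X(W)$ remains a bona fide witness rather than merely a block-positive operator, so that optimality is defined for it at all; both are handled by the displayed identity together with elementary facts about congruence transformations.
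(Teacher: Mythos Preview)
Your proof is correct and follows essentially the same approach as the paper: both rely on the fact that conjugation by $X\otimes{\bf1}$ with $X$ invertible simultaneously preserves the cones of positive semi-definite and block-positive operators, so an obstruction $P$ to optimality can be transported back and forth. The only cosmetic difference is that the paper proves (ii) $\Rightarrow$ (iii) directly by contrapositive (transporting $P$ from $(X\otimes{\bf1})W(X^\dagger\otimes{\bf1})$ to $(Y\otimes{\bf1})W(Y^\dagger\otimes{\bf1})$ via $YX^{-1}$), whereas you first isolate the invariance of optimality under $C_X$ as a standalone claim and then route all implications through (i).
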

\begin{proof}
``(iii) $\Rightarrow$ (i) $\Rightarrow$ (ii)'': Trivial.
``(ii) $\Rightarrow$ (iii)'': We argue by contrapositive. Assume there exists $X$ of full rank such that $(X\otimes{\bf1})W(X^\dagger\otimes{\bf1})$ is not optimal. Hence there exists $P\geq 0$, $P\neq 0$ such that $(X\otimes{\bf1})W(X^\dagger\otimes{\bf1})-P$ is block-positive.
Thus for all $Y$ of full rank, 
$$
(Y\otimes{\bf1})W(Y^\dagger\otimes{\bf1})-(YX^{-1}\otimes{\bf1})P((YX^{-1})^\dagger\otimes{\bf1})
$$
is block-positive, too, because block-positivity is preserved under such local transformations. Moreover, $(YX^{-1}\otimes{\bf1})P((YX^{-1})^\dagger\otimes{\bf1})$ is non-zero (because $X,Y$ are invertible) and positive which lets us conclude that $(Y\otimes{\bf1})W(Y^\dagger\otimes{\bf1})$ cannot be optimal for any $Y$ of full rank. This concludes the proof.
\end{proof}

\begin{figure}
    \centering
    \includegraphics[scale=1]{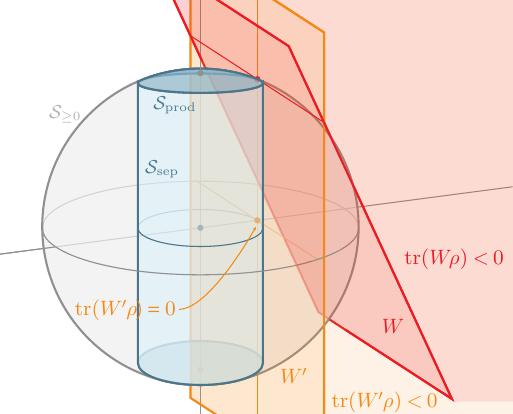}
    \caption{
        Graphical representation of the effect of entanglement witnesses and their optimality (recall from footnote~\cite{footnote_fig1} that the geometry in this figure is only illustrative). 
        The grey sphere ($\mathcal{S}_{\geq 0}$) represents the set of quantum states,
        while $\mathcal{S}_{\rm prod}$ (the north and south caps of the sphere) stands for the set of pure product states. 
        Then $\mathcal{S}_{\rm sep}$ (the cylinder-like shaped convex hull of the caps) is the set of separable states.
        Each witness $W$ defines a hyperplane ${\rm tr}(W\rho)=0$ separating detected entangled states (${\rm tr}(W\rho)<0$) from the rest (${\rm tr}(W\rho)\geq 0$).
        The witness $W$ (red) ``touches'' the set of separable states (the state marked as a red dot is separable but fulfills ${\rm tr}(W\rho)=0$) so $W$ is weakly optimal. 
        It is clearly sub-optimal, as all states it detects (states within the sphere in the area shaded in red, right from the red plane) are also detected by $W'$. 
        The witness $W'$ on the other hand satisfies ${\rm tr}(W'\rho)=0$ for a whole range of states (the orange line on the cylinder which is also in the orange hyperplane); in particular, this is true for a full-rank separable state (orange dot) which is equivalent to the spanning property (Thm.~\ref{thm_main_new}) and hence sufficient for optimality of $W'$.
        }
    \label{fig:witness_optimality}
\end{figure}

A sufficient criterion for optimality of a witness \mbox{$W\in\mathbb C^{m\times m}\otimes\mathbb C^{n\times n}$} (which is not necessary in general \cite{ATL11,ASL11} except for some special cases \cite{ATL11}) is $\dim ({\rm span} \{|a_i\rangle\otimes|b_i\rangle : \langle a_i\otimes b_i|W|a_i\otimes b_i\rangle=0\}) = mn$ which is called ``spanning property'' \cite{LKCH00}.
For characterizations of the spanning property we refer to Thm.~\ref{thm_main_new} below.
On the other hand, a necessary (but not sufficient) criterion for optimality of $W$ is that the hyperplane spanned by $W$ touches the set of separable states, i.e., ${\rm tr}(W\rho)=0$ for some separable state $\rho$. This is also known as ``weak optimality''; equivalently, there exist pure states $x,y$ such that $\langle x\otimes y|W|x\otimes y\rangle=0$ \cite{GT09,WXCS15}.
For a visual representation of the relation between different known (sufficient and/or necessary) conditions on entanglement witnesses, as well as how our results relate to these and complement the picture, see Fig.~\ref{fig:overview_diagram}.

For further reading, there is a characterization of optimality in terms of the subspace orthogonal to the one from the spanning property \cite{LKCH00} or in terms of generalized Kraus operators \cite{QH12}. There also exist specialized optimality criteria for the decomposable case \cite{LKCH00,Kye12}, and another relevant notion here is the so-called ``exposedness'' property for positive maps \cite{HK11}.

\begin{figure*}
    \centering
    \includegraphics[scale=1]{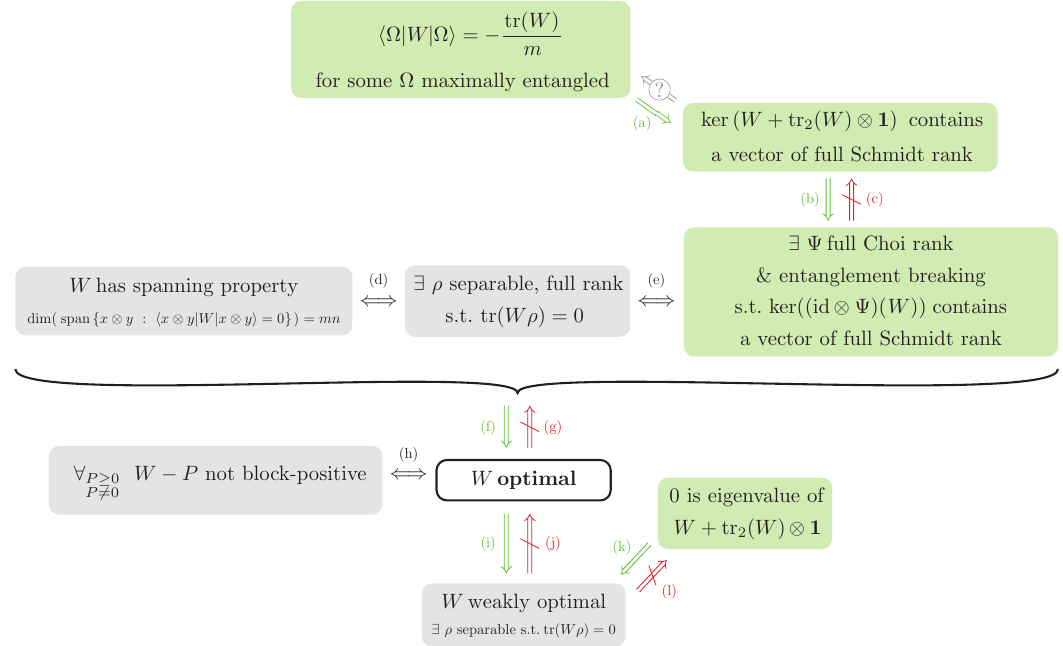}
    \caption{
    Overview of
    optimality conditions on entanglement witnesses $W\in\mathbb C^{m\times m}\otimes\mathbb C^{n\times n}$ where, without loss of generality,
    $m\leq n$. 
    Green boxes are new results proven in this work.
    (a) Proof of Coro.~\ref{coro:trace_bound_from_max_entangled},
    (b) Proof of Thm.~\ref{thm:full_Schmidt_rank_to_optimal},
    (c) Rem.~\ref{rem_kernel_weaker_than_spanning}~(i),
    (d) \& (e) Thm.~\ref{thm_main_new},
    (f) \& (h) Lewenstein et al.~\cite{LKCH00},
    (g) Augusiak et al.~\cite{ATL11,ASL11}, see also Choi et al.~\cite{CK12},
    (i) Gühne et al.~\cite{GT09},
    (j) $W=\mathbb F+|00\rangle\langle 00|$, because $W-|00\rangle\langle 00|$ is a witness (hence $W$ is not optimal), but $\langle 01|W|01\rangle=0$ (weakly optimal),
    (k) Coro.~\ref{cor:zero_eigval_to_weak_opt},
    (l) $W=\mathbb F+|00\rangle\langle 00|$ is weakly optimal but $0$ is not an eigenvalue of $W+{\rm tr}_2(W)\otimes{\bf1}$
    }
    \label{fig:overview_diagram}
\end{figure*}

Before moving on to our main results let us collect some common witnesses and relate them to the corresponding positive but not completely positive maps.
Later we will use
these examples to illustrate our optimality criteria. 
First, the transpose map is one of the most famous examples and induces the PPT (positive partial transpose) criterion $({\rm id}\otimes{}^T)(\rho)\geq 0$~\cite{peresSeparabilityCriterionDensity1996}.
Although in general it is only a necessary (but not sufficient) condition for separability of $\rho$, for system sizes $\mathbb C^2 \otimes \mathbb C^2$ and $\mathbb C^2 \otimes \mathbb C^3$ this is equivalent to $\rho$ being separable~\cite{HHH96}.
The transpose is self-adjoint, so
the witness obtained by direct application of Choi is the $\mathsf{SWAP}$ operator \mbox{$W_{{}^T} = \mathsf C ({}^T) = \mathbb F$}, and its optimality is easily established via the spanning property. 
The PPT criterion obtained from the transpose map is then equivalent to the family of witnesses $(X\otimes{\bf1})\mathbb F(X^\dagger\otimes{\bf1})=(X\otimes X^\dagger)\mathbb F$.

The map $\Phi:\mathbb C^{n\times n}\to\mathbb C^{n\times n}$, $\Phi(X):={\rm tr}(X){\bf1}-X$ is known as reduction map~\cite{HH99}. 
It, too, is self-adjoint so the induced witness is 
$$W_{\mathrm{red}} = \mathsf C (\Phi) = {\bf 1}_{n} \otimes {\bf 1}_{n} - | \Gamma \rangle \langle \Gamma |\,.$$
Its optimality was shown, e.g., in \cite{CP11}.
Next is the Choi map $\Phi:\mathbb C^{3\times 3}\to\mathbb C^{3\times 3}$---which was the first example of an indecomposable map (i.e., cannot be written as $\Phi_1+\Phi_2\circ{}^T$ for any $\Phi_1,\Phi_2$ completely positive)---defined via
$$
\Phi(X) \coloneqq 2 {\rm tr}(X) {\bf1}_3 - 2 {\rm diag} (X_{33},X_{11},X_{22}) - X
$$
\cite{Choi75b} (cf.~also \cite[p.~301]{Bengtsson17} for a list of generalizations).
Its adjoint map reads
$
    \Phi^\dagger(X) = 2\cdot {\rm tr}(X) {\bf1}_3 - 2 {\rm diag} (X_{22},X_{33},X_{11}) - X 
$
so the induced witness is
\begin{equation}\label{eq:Choi_witness}
    W_{\rm Choi} 
    = 2\cdot {\bf 1} \otimes {\bf 1} - 2(| 02 \rangle \langle 02 | + | 10 \rangle \langle 10 | + | 21 \rangle \langle 21 |)- | \Gamma \rangle \langle \Gamma |
\end{equation}
Beware that---unlike the previous witnesses---$W_{\rm Choi}$ is not optimal as $W_{\rm Choi}-(|01\rangle\langle 01|+|12\rangle\langle 12|+|20\rangle\langle 20|)$ is still block-positive (in the language of \cite{CKL92} this better witness corresponds to the positive map $\Phi[2,1,0]$ which is now, in fact, optimal \cite[Sec.~7.2]{CS14} without having the spanning property \cite{CK12}).

Another well known criterion comes from the Breuer-Hall map \cite{Breuer06b,Breuer06,Hall06}
which in the most general form \cite[Eq.~(15)]{AS09} is defined as
\begin{equation}\label{eq:gen_BH}
\begin{split}
    \Phi:\mathbb C^{2n\times 2n} & \to \mathbb C^{2n\times 2n} \\
    X & \mapsto{\rm tr}(X){\bf1}-X-UX^TU^\dagger
\end{split}    
\end{equation}
with $U\in\mathbb C^{2n\times 2n}$ sub-unitary ($U^\dagger U\leq{\bf1}$) and antisymmetric ($U^T=-U$), e.g., $U={\bf1}_n\otimes\sigma_y$.
The corresponding entanglement witness
\begin{equation}\label{eq:BreuerHallWitness}
    W_{\rm BH} = \mathsf C(\Phi^\dagger) = {\bf 1} \otimes {\bf 1} - | \Gamma \rangle \langle \Gamma | - ({\bf1}\otimes U) \mathbb F ({\bf1}\otimes U^\dagger)
\end{equation}
is known to be non-decomposable and optimal for $U$ unitary \cite{Breuer06b}, 
and using our new criterion we will easily show that it is optimal even for all $U$ sub-unitary. Note that the reduction map is just a special case of this generalized Breuer-Hall map if $U=0$.

Finally, choosing $n=2$ and $U={\bf1}_2\otimes\sigma_y$ in Eq.~\eqref{eq:gen_BH} reproduces the Robertson map \cite{Robertson83,Robertson85,CS12}, two generalizations of which read as follows \cite{CPS09,CP10}:
$\Phi_{2n}:\mathbb C^{2n\times 2n}\to \mathbb C^{2n\times 2n}$ which maps $X$ to $\frac1n$ times
\begin{equation}\label{eq:gen_Rob_1}
\begin{pmatrix}
    {\rm tr}(X_{22}){\bf1}&-(X_{12}+{\rm tr}(X_{21}){\bf1}-X_{21})\\
    -(X_{21}+{\rm tr}(X_{12}){\bf1}-X_{12})&{\rm tr}(X_{11}){\bf1}
\end{pmatrix}
\end{equation}
as well as
\begin{equation}\label{eq:gen_Rob_2}
\begin{split}
    \Phi_{4n}:\mathbb C^{4n\times 4n}&\to \mathbb C^{4n\times 4n}\\
    \begin{pmatrix}
        X_{11}&X_{12}\\X_{21}&X_{22}
    \end{pmatrix}&\mapsto \begin{pmatrix}
    {\rm tr}(X_{22}){\bf1}&-X_{12}-UX_{21}^TU^\dagger\\
    -X_{21}-UX_{12}^TU^\dagger&{\rm tr}(X_{11}){\bf1}
\end{pmatrix}
\end{split}
\end{equation}
where $U\in\mathbb C^{2n\times 2n}$ any antisymmetric unitary.
Notably, both are non-decomposable positive maps and their associated witnesses are optimal.
While there are many more examples of positive maps
in the literature (Woronowicz map \cite{Woronowicz76}, atomic positive maps \cite{Ha03}, positive maps from mutually unbiased bases \cite{CSW18} and measurements \cite{LLLFW20,SC21}, etc.)---cf.~also \cite[Sec.~4]{CK07}, \cite{GF13} and references therein---we will leave it at that and instead come to our main results.

\section{Main Results}\label{sec_mainres}

\subsection{New sufficient optimality criteria}\label{sec_opt}
Let us start with a novel characterization of the spanning property; we note that we believe the equivalence of (i) and (ii) in the following theorem to be known, but we could not locate a reference which spells this out explicitly so we will provide a proof for that as well, for the sake of completeness.

\begin{theorem}\label{thm_main_new}
    Given any $W\in\mathbb C^{m\times m}\otimes\mathbb C^{n\times n}$ block-positive, the following statements are equivalent.
    \begin{itemize}
        \item[(i)] $W$ has the spanning property.
        \item[(ii)] There exists a separable state $\rho\in\mathbb C^{m\times m}\otimes\mathbb C^{n\times n}$ of full rank such that ${\rm tr}(W\rho)=0$.
        \item[(iii)] There exists $\Psi:\mathbb C^{n\times n}\to\mathbb C^{m\times m}$ entanglement-breaking such that $\Psi$ has full Choi rank and ${\rm tr}(W\mathsf C(\Psi^\dagger))=0$.
        \item[(iv)] One of the following statements holds:
        \begin{itemize}
            \item[(a)] There exist $k\geq m$ and $\Psi:\mathbb C^{n\times n}\to\mathbb C^{k\times k}$ entanglement-breaking such that $\Psi$ has full Choi rank and $\ker(({\rm id}\otimes\Psi)(W))$ contains a vector of Schmidt rank $m$.
            \item[(b)] There exist $k\geq n$ and $\Psi:\mathbb C^{k\times k}\to\mathbb C^{m\times m}$ entanglement-breaking such that $\Psi$ has full Choi rank and $\ker((\Psi^\dagger\otimes{\rm id})(W))$ contains a vector of Schmidt rank $n$.
        \end{itemize}
    \end{itemize}
\end{theorem}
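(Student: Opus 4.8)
The plan is to prove the four statements equivalent by establishing $(i)\Leftrightarrow(ii)$, $(ii)\Leftrightarrow(iii)$, $(iii)\Rightarrow(iv)$ and $(iv)\Rightarrow(ii)$; since $(iii)\Rightarrow(iv)$ will actually produce clause $(iv)(a)$, the implication $(iv)\Rightarrow(ii)$ has to be checked separately for clauses $(a)$ and $(b)$. For $(i)\Leftrightarrow(ii)$: if $W$ has the spanning property, pick product vectors $|a_i\otimes b_i\rangle$ with $\langle a_i\otimes b_i|W|a_i\otimes b_i\rangle=0$ spanning $\mathbb C^{mn}$ and let $\rho$ be proportional to $\sum_i|a_i\otimes b_i\rangle\langle a_i\otimes b_i|$; this is separable, satisfies ${\rm tr}(W\rho)=0$, and has full rank because $\mathrm{range}(\rho)$ is the span of the $|a_i\otimes b_i\rangle$. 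Conversely, write a full-rank separable $\rho$ with ${\rm tr}(W\rho)=0$ as a sum $\sum_j|a_j\otimes b_j\rangle\langle a_j\otimes b_j|$ of rank-one product projectors; block-positivity of $W$ forces every $\langle a_j\otimes b_j|W|a_j\otimes b_j\rangle=0$, and since these product vectors span $\mathrm{range}(\rho)=\mathbb C^{mn}$, the spanning property follows.

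For $(ii)\Leftrightarrow(iii)$ I will use the Choi--Jamio\l{}kowski correspondence together with two standard facts: that $\mathsf C$ restricts to a rank-preserving bijection between entanglement-breaking maps of full Choi rank and full-rank separable positive operators (it is a rank-preserving bijection between completely positive maps and positive operators, and $\mathsf C(\Phi)$ is separable iff $\Phi$ is entanglement-breaking), and that the adjoint of an entanglement-breaking map is entanglement-breaking of the same Choi rank (immediate from~\eqref{eq:Choi_adjoint}, since transposition and conjugation by $\mathbb F$ preserve separability and rank). Given $\rho$ from $(ii)$, the map $\Psi:=\big(\mathsf C^{-1}(\rho)\big)^\dagger:\mathbb C^{n\times n}\to\mathbb C^{m\times m}$ is then entanglement-breaking of full Choi rank with $\mathsf C(\Psi^\dagger)=\rho$, giving $(iii)$; conversely, for $\Psi$ as in $(iii)$ the operator $\mathsf C(\Psi^\dagger)$ is separable, positive semi-definite and of full rank, so after normalization it is the state required for $(ii)$.

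For $(iii)\Rightarrow(iv)$, take $\Psi$ from $(iii)$, so $k=m$. The computation behind~\eqref{eq:witness_to_channel_inequality} gives ${\rm tr}(W\mathsf C(\Psi^\dagger))=\langle\Gamma|({\rm id}\otimes\Psi)(W)|\Gamma\rangle=0$ with $|\Gamma\rangle=\sum_{j=1}^m|jj\rangle$, which has Schmidt rank $m$; writing $\Psi$ in its entanglement-breaking form $\sum_i{\rm tr}(F_i\cdot)\sigma_i$ with $F_i,\sigma_i\geq0$ shows $({\rm id}\otimes\Psi)(W)=\sum_i{\rm tr}_2\big(({\bf1}\otimes F_i)W\big)\otimes\sigma_i\geq0$ by block-positivity of $W$, so the vanishing expectation value puts $|\Gamma\rangle$ into $\ker(({\rm id}\otimes\Psi)(W))$, which is clause $(iv)(a)$. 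For $(iv)\Rightarrow(ii)$, suppose first $(iv)(a)$ with some $\Psi:\mathbb C^{n\times n}\to\mathbb C^{k\times k}$ and a Schmidt-rank-$m$ vector $v\in\ker(({\rm id}\otimes\Psi)(W))$. Since $v$ has Schmidt rank $m$ we may write $v=({\bf1}_m\otimes A)|\Gamma\rangle$ with $A\in\mathbb C^{k\times m}$ of full column rank; then $\tilde\Psi:=A^\dagger\Psi(\cdot)A$ is entanglement-breaking (a completely positive map composed with $\Psi$) and has full Choi rank, because $\mathsf C(\tilde\Psi)=({\bf1}\otimes A)^\dagger\mathsf C(\Psi)({\bf1}\otimes A)$ is a conjugation of the positive definite $\mathsf C(\Psi)$ by the injective operator ${\bf1}\otimes A$, and
\begin{equation*}
{\rm tr}\big(W\mathsf C(\tilde\Psi^\dagger)\big)=\langle\Gamma|({\rm id}\otimes\tilde\Psi)(W)|\Gamma\rangle=\langle v|({\rm id}\otimes\Psi)(W)|v\rangle=0
\end{equation*}
since $v$ lies in the kernel; this gives $(iii)$ and hence $(ii)$. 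If instead $(iv)(b)$ holds, the identical argument runs on the first tensor factor: using the ``left'' Choi map $\Phi\mapsto(\Phi\otimes{\rm id})(|\Gamma\rangle\langle\Gamma|)=\mathbb F\,\mathsf C(\Phi)\,\mathbb F$, one writes the Schmidt-rank-$n$ kernel vector $u$ of $(\Psi^\dagger\otimes{\rm id})(W)$ as $(A\otimes{\bf1}_n)|\Gamma\rangle$ with $A$ of full column rank, sets $\tilde\Psi:=\Psi(A(\cdot)A^\dagger)$---again entanglement-breaking of full Choi rank---and obtains $(\tilde\Psi\otimes{\rm id})(|\Gamma\rangle\langle\Gamma|)$ proportional to a full-rank separable state annihilated by $W$, i.e.\ $(ii)$.

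The routine ingredients are the Choi--Jamio\l{}kowski bookkeeping and the stability properties used above (adjoint of entanglement-breaking is entanglement-breaking, composition of an entanglement-breaking map with a completely positive one is entanglement-breaking, and full Choi rank is preserved under conjugation by an injective operator). The genuinely load-bearing idea---and the step I expect to require the most care---is the ``compression'' in $(iv)\Rightarrow(ii)$: clause $(iv)$ buys flexibility by allowing an enlarged output space ($k\geq m$, resp.\ $k\geq n$) and by only demanding a full-Schmidt-rank vector in a kernel rather than the maximally entangled vector itself, and one must verify that projecting back down along a full-column-rank $A$ simultaneously preserves entanglement-breaking-ness, preserves full Choi rank, and converts the kernel condition into the exact identity ${\rm tr}(W\mathsf C(\tilde\Psi^\dagger))=0$. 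Keeping the $m\neq n$ index bookkeeping straight in clause $(iv)(b)$---which Choi map, which adjoint, which tensor factor appear---is the most error-prone part of the write-up.
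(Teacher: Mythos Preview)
Your proposal is correct and follows essentially the same approach as the paper: the $(i)\Leftrightarrow(ii)$ equivalence via spanning product vectors and range considerations, the $(ii)\Leftrightarrow(iii)$ step via Choi--Jamio\l{}kowski, and the compression by a full-column-rank $A$ for $(iv)\Rightarrow(iii)$ are all exactly the paper's arguments. The only cosmetic differences are that the paper splits $(iii)\Rightarrow(iv)$ into cases $m\leq n$ and $m\geq n$ (producing $(iv)(a)$ or $(iv)(b)$ respectively), whereas you observe that $(iv)(a)$ with $k=m$ always suffices; and for $(iv)(b)\Rightarrow(iii)$ the paper builds $\tilde\Psi=\big(\Psi(X^T(\cdot)^T(X^T)^\dagger)\big)^T$ with extra transposes to land exactly on the form $\mathsf C(\tilde\Psi^\dagger)$, while you take the cleaner $\tilde\Psi=\Psi(A(\cdot)A^\dagger)$ and go straight to the separable state $(\tilde\Psi\otimes{\rm id})(|\Gamma\rangle\langle\Gamma|)$, i.e.\ directly to $(ii)$---both routes are valid and yield the same conclusion.
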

\begin{proof}
    ``(ii) $\Rightarrow$ (i)'': As mentioned previously, because $\rho$ is separable it can be written as $\rho=\sum_j\lambda_j|x_j\rangle\langle x_j|\otimes|y_j\rangle\langle y_j|$ for some $\lambda_j>0$, $\sum_j\lambda_j=1$ and some unit vectors $\{x_j\}_j,\{y_j\}_j$. Now by assumption $0={\rm tr}(W\rho)=\sum_j\lambda_j\langle x_j\otimes y_j|W|x_j\otimes y_j\rangle$; but $W$ is block-positive so all summands are non-negative, meaning---as $\lambda_j>0$---this sum vanishes if and only if $\langle x_j\otimes y_j|W|x_j\otimes y_j\rangle=0$ for all $j$. Thus all that is left to show is that $\{x_j\otimes y_j:j\}$ spans the full space: because $\rho$ is full rank
    \begin{align*}
        \mathbb C^m\otimes\mathbb C^n=\operatorname{range}\rho&=\{\rho|x\rangle:x\in\mathbb C^m\otimes\mathbb C^n\}\\
        &= \Big\{\sum_j\lambda_j\langle x_j\otimes y_j|x\rangle |x_j\otimes y_j\rangle:x\Big\}\\
        &\subseteq \Big\{\sum_jc_j|x_j\otimes y_j\rangle:c_j\in\mathbb C\Big\}\\
        &={\rm span}\{|x_j\otimes y_j\rangle:j\}\subseteq\mathbb C^m\otimes\mathbb C^n\,.
    \end{align*}
    
     ``(i) $\Rightarrow$ (ii)'': Assume $W$ has the spanning property, i.e., there exist $mn$ linearly independent pure product states $a_i\otimes b_i$ such that $\langle a_i\otimes b_i|W|a_i\otimes b_i\rangle=0$ for all $i$. In particular this implies
    $$
    \{a_i\otimes b_i:i\}^\perp=({\rm span}\{a_i\otimes b_i:i\})^\perp=(\mathbb C^m\otimes\mathbb C^n)^\perp=\{0\}
    $$
    which means that for all $x\neq 0$ there exists $i$ such that $\langle x|a_i\otimes b_i\rangle\neq 0$.
    Now to construct the separable full-rank state we simply define $\rho=\frac1C\sum_i|a_i\rangle\langle a_i|\otimes|b_i\rangle\langle b_i|$ with $C=\sum_i\|a_i\|^2\|b_i\|^2$ a normalization constant. Certainly, $\rho$ is separable and ${\rm tr}(W\rho)=\frac1C\sum_i\langle a_i\otimes b_i|W|a_i\otimes b_i\rangle=0$. To see that $\rho$ is full rank let any $x\neq 0$ be given and observe that $\langle x|\rho|x\rangle=\frac1C\sum_i|\langle a_i\otimes b_i|x\rangle|^2>0$ by our previous argument.

    ``(ii) $\Leftrightarrow$ (iii)'': As explained in Sec.~\ref{sec_prelim}, the CJ
    isomorphism serves as a one-to-one translation between positive semi-definite, separable \& full rank (on the level of bipartite operators), and completely positive, entanglement-breaking \& full Choi rank (on the level of linear maps), respectively. Hence given $\rho$ choose $\Psi=\mathsf C^{-1}(\rho)^\dagger$, and given $\Psi$ choose $\rho=\frac{\mathsf C(\Psi^\dagger)}{{\rm tr}(\mathsf C(\Psi^\dagger))}$.
    Here we use that $\Psi$ is entanglement-breaking ($\mathsf C(\Psi)$ is separable) if and only if $\Psi^\dagger$ is entanglement-breaking which follows at once from Eq.~\eqref{eq:Choi_adjoint}.

    ``(iii) $\Rightarrow$ (iv)'': First assume $n\geq m$; we want to show that (iii) implies (iv),(a) for $k=m$. We compute
    \begin{align*}
        0={\rm tr}(W\mathsf C(\Psi^\dagger))&={\rm tr}(W({\rm id}\otimes\Psi^\dagger)(|\Gamma\rangle\langle\Gamma|))\\
        &={\rm tr}(({\rm id}\otimes\Psi)(W)|\Gamma\rangle\langle\Gamma|)\\
        &=\langle\Gamma|({\rm id}\otimes\Psi)(W)|\Gamma\rangle\,.
    \end{align*}
    As stated previously,
    $({\rm id}\otimes\Psi)(W)\in\mathbb C^{m\times m}\otimes\mathbb C^{m\times m}$ is positive semi-definite
    because $\Psi$ is entanglement-breaking and $W$ is block-positive.
    Hence
    $$
    0=\langle\Gamma|({\rm id}\otimes\Psi)(W)|\Gamma\rangle=\Big\|\sqrt{({\rm id}\otimes\Psi)(W)}|\Gamma\rangle\Big\|^2
    $$
    implies that $|\Gamma\rangle\in\mathbb C^m\otimes\mathbb C^m$ (i.e., Schmidt rank $m$)
    is an element of $\ker(({\rm id}\otimes\Psi)(W))$, as desired.
    Similarly, if $m\geq n$ then (iv),(b) holds for $k=n$ and $\tilde\Psi$ such that $\mathsf C(\tilde\Psi^\dagger)=\mathsf C(\Psi^\dagger)^T$ (as argued before $\tilde\Psi$ is still entanglement-breaking and has full Choi rank): using Eq.~\eqref{eq:Choi_adjoint} $$0={\rm tr}(W\mathsf C(\tilde\Psi^\dagger)^T)={\rm tr}(W((\tilde\Psi\otimes{\rm id})(|\Gamma\rangle\langle\Gamma)))$$
    so as before $\mathbb C^n\otimes\mathbb C^n\ni|\Gamma\rangle\in\ker((\tilde\Psi^\dagger\otimes{\rm id})(W))$ (Schmidt rank $n$).

    ``(iv) $\Rightarrow$ (iii)'': First assume that (iv),(a) holds, i.e., there exists $k\geq m$, some $\Psi:\mathbb C^{n\times n}\to\mathbb C^{k\times k}$ entanglement-breaking of full Choi rank, and---using the previously introduced concept of vectorization---some $X\in\mathbb C^{k\times m}$, ${\rm rank}(X)=m$ (by definition of the Schmidt rank) such that ${\rm vec}(X)\in \ker(({\rm id}\otimes\Psi)(W))\subseteq\mathbb C^m\otimes\mathbb C^k$.
    We claim that $\tilde\Psi:=X^\dagger\Psi(\cdot)X:\mathbb C^{n\times n}\to\mathbb C^{m\times m}$ satisfies (iii). First,
    \begin{align*}
        {\rm tr}(W\mathsf C(\tilde\Psi^\dagger))&={\rm tr}\big(W\big(({\rm id}\otimes\Psi^\dagger)(\mathsf C(X(\cdot)X^\dagger))\big)\big)\\
        &={\rm tr}\big(W\big(({\rm id}\otimes\Psi^\dagger)(|{\rm vec}(X)\rangle\langle {\rm vec}(X)|)\big)\big)\\
        &=\langle {\rm vec}(X)|({\rm id}\otimes\Psi)(W)|{\rm vec}(X)\rangle=0\,.
    \end{align*}
    Next, $\mathsf C(\tilde\Psi)=({\bf1}\otimes X)^\dagger\mathsf C(\Psi)({\bf1}\otimes X)$ immediately shows that $\tilde\Psi$ is entanglement-breaking (because $\Psi$ is). Thus all that remains is to show that $\ker(\mathsf C(\tilde\Psi))=\{0\}$.
    For this let $Y\in\mathbb C^{m\times n}$ be given such that ${\rm vec}(Y)\in \ker(\mathsf C(\tilde\Psi))$. A straightforward computation shows
    \begin{align*}
        0&=\langle{\rm vec}(Y)|\mathsf C(\tilde\Psi)|{\rm vec}(Y)\rangle\\
        &=\langle({\bf1}\otimes Y)\Gamma|({\bf1}\otimes X^\dagger)\mathsf C(\Psi)({\bf1}\otimes X)|({\bf1}\otimes Y)\Gamma\rangle\\
        &=\langle{\rm vec}(XY)|\mathsf C(\Psi)|{\rm vec}(XY)\rangle\,.
    \end{align*}
    We know $\mathsf C(\Psi)>0$ by assumption so ${\rm vec}(XY)$ can only be zero---but ${\rm vec}$ is an isomorphism so this even shows $XY=0$.
    The final step is to deduce $Y=0$ from this. Indeed, given any $z\in\mathbb C^n$ we know $0=X(Yz)$, i.e., $Yz\in\ker(X)$. But the rank-nullity theorem \cite[Eq.~(1.54)]{Watrous18} implies $m=\dim\ker(X)+{\rm rank}(X)=\dim\ker(X)+m$ so $\ker(X)=\{0\}$. Hence $Yz=0$; but $z$ was arbitrary so $Y=0$ which---because ${\rm vec}(Y)$ was an arbitrary kernel element of $\mathsf C(\tilde\Psi)$---shows $\mathsf C(\tilde\Psi)>0$ as claimed.

    For (iv),(b) $\Rightarrow$ (iii) one argues analogously, where now $X\in\mathbb C^{n\times k}$ is the (un-vectorized) full-rank kernel element of $(\Psi^\dagger\otimes{\rm id})(W)$, and $\tilde\Psi:=( \Psi(X^T(\cdot)^T(X^T)^\dagger )^T:\mathbb C^{n\times n}\to\mathbb C^{m\times m}$ is the full Choi rank entanglement-breaking map we are looking for. Again using Eq.~\eqref{eq:Choi_adjoint}, the rest follows from 
    $\mathsf C(\tilde\Psi^\dagger)=(\Psi\otimes{\rm id})(|{\rm vec}(X)\rangle\langle {\rm vec}(X)|)$
    as well as
    $\mathsf C(\tilde\Psi)=(X^T\otimes{\bf1})^\dagger\mathsf C(\Psi)^T(X^T\otimes{\bf1})$.
\end{proof}

\noindent For a simple example which nicely illustrates how the construction in the proof of Thm.~\ref{thm_main_new} (i) $\Rightarrow$ (ii) works we refer to Appendix~\ref{appA}. There one also sees that the spanning property yields not just a single state $\rho$ on which $W$ vanishes, but an entire face of the set of separable states.

While weak optimality of witnesses is not the main focus of this work, let us still assess how our first theorem has to be adjusted if one wants to characterize weak optimality using entanglement-breaking maps:

\begin{remark}\label{rem_weak_opt}
The equivalence from Thm.~\ref{thm_main_new} becomes less obvious when replacing the spanning property in (i) by weak optimality, in the following sense: There exist block-positive matrices $W\in\mathbb C^{m\times m}\otimes\mathbb C^{n\times n}$ which are not weakly optimal,
but there exists a non-zero entanglement-breaking $\Psi:\mathbb C^{m\times m}\to\mathbb C^{n\times n}$ such that $0$ is an eigenvalue of $({\rm id}\otimes\Psi^\dagger)(W)$ and of $(\Psi\otimes{\rm id})(W)$ (to a common eigenvector, even).
A simple example here is $W={\bf1}\in\mathbb C^{4\times 4}$ and $\Psi={\rm tr}(\cdot)|0\rangle\langle0|$: $W$ is clearly not weakly optimal, but $({\rm id}\otimes\Psi)(W)=2\cdot{\bf1}\otimes|0\rangle\langle 0|$ and $(\Psi^\dagger\otimes{\rm id})=|0\rangle\langle 0|\otimes{\bf1}$
both have $|11\rangle$ in their kernel.
In fact, it is easy to see that weak optimality is equivalent to 
the existence of $\Psi$ entanglement-breaking and non-zero $x\in\ker(({\rm id}\otimes\Psi^\dagger)(W))\cup \ker((\Psi\otimes{\rm id})(W))$ such that, additionally, $({\rm id}\otimes\Psi)(|x\rangle\langle x|)$ or $(\Psi^\dagger\otimes{\rm id})(|x\rangle\langle x|)$ does not vanish.

The reason we did not encounter this problem in Thm.~\ref{thm_main_new} is that either the full Choi rank condition on $\Psi$ or the full Schmidt rank condition on $x$ are sufficient to guarantee $0\neq ({\rm id}\otimes\Psi)(|x\rangle\langle x|)=(X^T\otimes{\bf1})\mathsf C(\Phi)(X^T\otimes{\bf1})^\dagger$ (and similarly for $(\Psi^\dagger\otimes{\rm id})(|x\rangle\langle x|)$). In particular, this shows that given some $\Psi$ entanglement-breaking with full Choi rank, if $0$ is an eigenvalue of $({\rm id}\otimes\Psi^\dagger)(W)$ or $(\Psi\otimes{\rm id})(W)$, then $W$ is weakly optimal.
\end{remark}

Now at first glance it looks like Thm.~\ref{thm_main_new} encodes a hard problem (assessing optimality) into the problem of finding a suitable entanglement-breaking channel---which
involves checking separability (of the Choi state), a famously strongly NP-hard problem \cite{Gurvits03,Ioannou07,Gharibian10}.
However, one could of course pick a specific entanglement-breaking $\Psi$ and get a simple sufficient criterion for optimality in the form of checking the kernel of a certain matrix; this is exactly what we will do. Beware that, much like how different witnesses detect different entangled states, there will not exist one such $\Psi$ which certifies the spanning property for all witnesses,
and some $\Psi$ will be worse than others (ideally, $\mathsf C(\Psi)$ should be on the boundary of the separable states).
However we already encountered an entanglement-breaking channel in Eq.~\eqref{eq:EB_example}, which as it turns out certifies optimality for almost all known witnesses, thus making for a very simple, yet quite powerful criterion for 
optimality. This is what the remainder of this subsection will be about.

First, let us explicitly apply the fact that entanglement-breaking maps send block-positive to positive semi-definite operators, to the channel from Eq.~\eqref{eq:EB_example}: doing so gives a first non-trivial, new necessary constraint on entanglement witnesses:
\begin{corollary}\label{coro_1}
    For all $W\in\mathbb C^{m\times m}\otimes\mathbb C^{n\times n}$ block-positive:
    \begin{align*}
    W+{\bf1}\otimes{\rm tr}_1(W)&\geq 0\\
    W+{\rm tr}_2(W)\otimes{\bf1}&\geq 0
    \end{align*}
\end{corollary}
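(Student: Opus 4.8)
The statement to prove is that for every block-positive $W \in \mathbb{C}^{m\times m}\otimes\mathbb{C}^{n\times n}$ one has $W + {\bf 1}\otimes{\rm tr}_1(W) \geq 0$ and $W + {\rm tr}_2(W)\otimes{\bf 1} \geq 0$. The plan is to apply the entanglement-breaking channel from Eq.~\eqref{eq:EB_example} on the appropriate subsystem and use the fact (recalled just before Lemma~1) that entanglement-breaking channels map block-positive operators to positive semi-definite ones.

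First I would prove the second inequality. Let $\Psi:\mathbb{C}^{m\times m}\to\mathbb{C}^{m\times m}$, $\Psi(X) = \frac{1}{m+1}\big(X + {\rm tr}(X){\bf 1}\big)$, which by Eq.~\eqref{eq:EB_example} is an entanglement-breaking channel. Since $W$ is block-positive and $\Psi$ is entanglement-breaking, the operator $(\Psi\otimes{\rm id})(W) \geq 0$. Now I compute this explicitly: writing $W = \sum_{ijkl} W_{ij,kl}\, |i\rangle\langle j|\otimes|k\rangle\langle l|$, the first tensor factor gets mapped as $|i\rangle\langle j| \mapsto \frac{1}{m+1}(|i\rangle\langle j| + \delta_{ij}{\bf 1})$, so
\begin{equation*}
(\Psi\otimes{\rm id})(W) = \frac{1}{m+1}\Big(W + {\bf 1}\otimes {\rm tr}_1(W)\Big),
\end{equation*}
using that $\sum_{ij}\delta_{ij}\big(\sum_{kl}W_{ij,kl}|k\rangle\langle l|\big) = {\rm tr}_1(W)$ and that the identity on the first factor pairs with it to give ${\bf 1}\otimes{\rm tr}_1(W)$. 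Hence $W + {\bf 1}\otimes{\rm tr}_1(W) = (m+1)(\Psi\otimes{\rm id})(W) \geq 0$. Wait --- I should be careful about which reduced trace appears where: applying the channel to the \emph{first} factor collapses that leg, producing a term of the form ${\bf 1}\otimes(\text{second-factor operator})$, and that second-factor operator is exactly ${\rm tr}_1(W)$ (the partial trace over the first system). So this yields the first displayed inequality, $W + {\bf 1}\otimes{\rm tr}_1(W)\geq 0$. Applying the analogous channel $\Psi':\mathbb{C}^{n\times n}\to\mathbb{C}^{n\times n}$, $\Psi'(X) = \frac{1}{n+1}(X+{\rm tr}(X){\bf 1})$ to the \emph{second} factor, i.e.\ forming $({\rm id}\otimes\Psi')(W)$, gives by the same bookkeeping $\frac{1}{n+1}\big(W + {\rm tr}_2(W)\otimes{\bf 1}\big) \geq 0$, which is the second displayed inequality.

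The only genuine content is verifying the identity $(\Psi\otimes{\rm id})(W) = \frac{1}{m+1}(W + {\bf 1}\otimes{\rm tr}_1(W))$, which is a one-line index computation once one notes $\Psi = \frac{1}{m+1}({\rm id}_{\mathbb{C}^{m\times m}} + {\rm tr}(\cdot){\bf 1})$ and that $({\rm tr}(\cdot){\bf 1}\otimes{\rm id})(W) = {\bf 1}\otimes{\rm tr}_1(W)$ by definition of the partial trace. There is no real obstacle; the main thing to get right is the orientation of the partial traces (which leg is collapsed versus which survives) so that ${\rm tr}_1$ and ${\rm tr}_2$ land in the correct displayed line. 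An alternative, even shorter route avoiding the channel language entirely: test positivity against an arbitrary $\rho\geq 0$ via ${\rm tr}\big((W+{\bf 1}\otimes{\rm tr}_1(W))\rho\big) = {\rm tr}(W\rho) + {\rm tr}\big({\rm tr}_1(W)\,{\rm tr}_1(\rho)\big)$ and observe this equals ${\rm tr}\big(W(\rho + \frac{{\bf 1}}{?}\otimes{\rm tr}_1(\rho))\big)$-type expression --- but the cleanest is simply to invoke Eq.~\eqref{eq:EB_example} as above.
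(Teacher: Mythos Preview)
Your proposal is correct and follows essentially the same route as the paper: apply the entanglement-breaking channel $\Psi(X)=\frac{1}{d+1}(X+{\rm tr}(X){\bf1})$ from Eq.~\eqref{eq:EB_example} to one tensor leg, use that entanglement-breaking maps send block-positive operators to positive semi-definite ones, and identify the result as $\frac{1}{d+1}(W+{\rm tr}_2(W)\otimes{\bf1})$ respectively $\frac{1}{d+1}(W+{\bf1}\otimes{\rm tr}_1(W))$. The only cosmetic differences are that the paper drops the normalization $\frac{1}{d+1}$ up front and verifies the partial-trace identity on product tensors $A\otimes B$ rather than via index bookkeeping.
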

\begin{proof}
We already know $({\rm id}\otimes\Psi)(W),(\Psi\otimes{\rm id})(W)\geq 0$ where $\Psi(X)=X+{\rm tr}(X){\bf1}$ is (a re-scaled version of the) map from Eq.~\eqref{eq:EB_example}. So because $$({\rm id}\otimes\Psi)(W)=W+({\rm id}\otimes{\rm tr}(\cdot){\bf1})(W)\,,$$ all we need to show is that
$({\rm id}\otimes{\rm tr}(\cdot){\bf1})(X)={\rm tr}_2(X)\otimes{\bf1}$ for all $X\in\mathbb C^{m\times m}\otimes\mathbb C^{n\times n}$ (the statement for $\Psi\otimes{\rm id}$ then is analogous).
Equivalently, by linearity it suffices to check this on all $X=A\otimes B$, i.e.,
\begin{align*}
    ({\rm id}\otimes{\rm tr}(\cdot){\bf1})(A\otimes B)={\rm tr}(B)A\otimes{\bf1}={\rm tr}_2(A\otimes B)\otimes{\bf1}
\end{align*}
which concludes the proof.
\end{proof}

Before moving on, some remarks are in order:

\begin{remark} \label{remark:operator_inequality_tightness}
\begin{itemize}
\item[(i)] At first glance Coro.~\ref{coro_1} resembles the reduction criterion which says that all separable states satisfy ${\rm tr}_2(\rho)\otimes {\bf1}-\rho,{\bf1}\otimes{\rm tr}_1(\rho)-\rho\geq 0$. However, the crucial differences to Coro.~\ref{coro_1} are that the sign is flipped and that the logic is reversed. To emphasize this, on the level of linear maps the reduction criterion follows from the fact that ${\rm tr}(\cdot){\bf1}-{\rm id}$ is positive, which is inequivalent to---and in fact not nearly as simple to show as---${\rm tr}(\cdot){\bf1}+{\rm id}$ being entanglement breaking.
    \item[(ii)] Coro.~\ref{coro_1} is tight in the sense that there exist non-zero witnesses for which $W+{\bf1}\otimes{\rm tr}_1(W)$, $W+{\rm tr}_2(W)\otimes{\bf1}$ have zero as an eigenvalue; more on this
    in a moment, cf.~also Coro.~\ref{cor:zero_eigval_to_weak_opt} below.
    \item[(iii)] Sometimes---for example, in the context of Bell inequalities---witnesses are defined with respect to some threshold $C$, that is, as operators $W$ which satisfy ${\rm tr}(W\rho)\geq C$ for all separable states $\rho$, but there exists some entangled state $\sigma$ such that ${\rm tr}(W\sigma)<C$. This is of course equivalent to the standard framework where $C=0$ using the simple shift $W\mapsto W-C\cdot{\bf1}$.
    Yet, introducing such a threshold can be beneficial for relating witnesses to local measurements \cite{Terhal00}. The reason we mention this is to illustrate how Coro.~\ref{coro_1} changes in this scenario: if $W$ is a witness \textnormal{to the threshold} $C$, so $W-C\cdot{\bf1}$ is still block-positive, then
    \begin{equation}\label{eq:witness_shift}
    \begin{split}
    W+{\bf1}\otimes{\rm tr}_1(W)&\geq C(n+1){\bf1}\\
    W+{\rm tr}_2(W)\otimes{\bf1}&\geq C(m+1){\bf1}\,.
    \end{split}
    \end{equation}
    Analogously, if a witness is defined with respect to an upper bound (i.e., ${\rm tr}(W\rho)\leq C$ for all $\rho$ separable) then~\eqref{eq:witness_shift} holds once $\geq$ is replaced by $\leq$.
\end{itemize}
\end{remark}
With this we are ready to state the sufficient optimality criterion which the depolarizing channel induces via Thm.~\ref{thm_main_new}~(iv):
\begin{theorem} \label{thm:full_Schmidt_rank_to_optimal}
    Given any witness $W\in\mathbb C^{m\times m}\otimes\mathbb C^{n\times n}$,
    the following statements hold.
    \begin{itemize}
        \item[(i)] If $m\leq n$ and if $\ker(W+{\rm tr}_2(W)\otimes{\bf1})$ contains a vector of Schmidt rank $m$, then $W$ is optimal.
        \item[(ii)] If $m\geq n$ and if $\ker(W+{\bf1}\otimes{\rm tr}_1(W))$ contains a vector of Schmidt rank $n$, then $W$ is optimal.
    \end{itemize}
\end{theorem}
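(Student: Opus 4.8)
The plan is to obtain both parts as immediate consequences of Thm.~\ref{thm_main_new}~(iv), specialized to the concrete entanglement-breaking channel of Eq.~\eqref{eq:EB_example}. Since the spanning property implies optimality \cite{LKCH00}, it suffices to certify the spanning property, and the content of Thm.~\ref{thm_main_new} is precisely that this reduces to exhibiting a full-Choi-rank entanglement-breaking map whose image of $W$ has a kernel vector of maximal Schmidt rank. So the whole proof amounts to plugging in the right channel and keeping track of dimensions.

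For part (i) I would assume $m\leq n$ and take $\Psi:\mathbb C^{n\times n}\to\mathbb C^{n\times n}$, $\Psi(X)=\tfrac{1}{n+1}(X+{\rm tr}(X){\bf1})$, which is an entanglement-breaking channel by Eq.~\eqref{eq:EB_example}. Its Choi matrix $\mathsf C(\Psi)=\tfrac{1}{n+1}({\bf1}\otimes{\bf1}+|\Gamma\rangle\langle\Gamma|)$ is positive definite, hence invertible, so $\Psi$ has full Choi rank. Exactly as in the proof of Coro.~\ref{coro_1} one has $({\rm id}\otimes\Psi)(W)=\tfrac{1}{n+1}\big(W+{\rm tr}_2(W)\otimes{\bf1}\big)$, so rescaling is irrelevant for the kernel and $\ker(({\rm id}\otimes\Psi)(W))=\ker(W+{\rm tr}_2(W)\otimes{\bf1})$. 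Because $n\geq m$, the pair $(k,\Psi)=(n,\Psi)$ satisfies all requirements of Thm.~\ref{thm_main_new}~(iv),(a), so the assumed Schmidt-rank-$m$ kernel vector certifies the spanning property and $W$ is optimal.

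For part (ii) I would assume $m\geq n$ and instead use $\Psi:\mathbb C^{m\times m}\to\mathbb C^{m\times m}$, $\Psi(X)=\tfrac{1}{m+1}(X+{\rm tr}(X){\bf1})$, which is again entanglement-breaking with positive-definite (hence invertible) Choi matrix and is moreover self-adjoint, so $\Psi^\dagger=\Psi$. The mirror identity to the one in the proof of Coro.~\ref{coro_1} gives $(\Psi^\dagger\otimes{\rm id})(W)=\tfrac{1}{m+1}\big(W+{\bf1}\otimes{\rm tr}_1(W)\big)$, whose kernel equals $\ker(W+{\bf1}\otimes{\rm tr}_1(W))$; since $m\geq n$, the pair $(k,\Psi)=(m,\Psi)$ satisfies the requirements of Thm.~\ref{thm_main_new}~(iv),(b), so the assumed Schmidt-rank-$n$ kernel vector again certifies the spanning property, making $W$ optimal.

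I do not expect a real obstacle here: the substance sits entirely in Thm.~\ref{thm_main_new}, and all that remains is the trivial verification that $\mathsf C(\Psi)$ is invertible together with keeping track of which subsystem $\Psi$ acts on, which branch (a)/(b) of (iv) is being invoked, and that the auxiliary dimension $k$ respects the required bound ($k=n\geq m$ in (i), $k=m\geq n$ in (ii)).
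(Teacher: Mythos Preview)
Your proposal is correct and follows essentially the same approach as the paper: both specialize Thm.~\ref{thm_main_new}~(iv) to the entanglement-breaking channel $\Psi(X)=\tfrac{1}{d+1}(X+{\rm tr}(X){\bf1})$ (with $d=n$ for (i) and $d=m$ for (ii)), verify its full Choi rank via $\mathsf C(\Psi)=\tfrac{1}{d+1}({\bf1}+|\Gamma\rangle\langle\Gamma|)>0$, identify $({\rm id}\otimes\Psi)(W)$ resp.\ $(\Psi^\dagger\otimes{\rm id})(W)$ with the stated operators (using self-adjointness of $\Psi$ for (ii)), and then invoke the spanning property $\Rightarrow$ optimality.
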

\begin{proof}
(i): 
$\Psi:\mathbb C^{n\times n}\to\mathbb C^{n\times n}$, $\Psi
(X)=\frac{1}{n+1}(X+{\rm tr}(X){\bf1}_n)$ from Eq.~\eqref{eq:EB_example} is an entanglement-breaking channel, and it has full Choi rank because $\mathsf C(\Psi)=\frac{1}{n+1}(|\Gamma\rangle\langle\Gamma|+{\bf1})>0$.
Moreover, as seen in the proof of Coro.~\ref{coro_1}
$({\rm id}\otimes\Psi)(X)=\frac{1}{n+1}(X+{\rm tr}_2(X)\otimes{\bf1})$ for all $X$.
Hence we apply Thm.~\ref{thm_main_new}~(iv),(a) for $k=n\geq m$ to find that---because $\ker(({\rm id}\otimes\Psi)(W))=\ker(W+{\rm tr}_2(W)\otimes{\bf1})$ contains a vector of Schmidt rank $m$---$W$ has the spanning property. But as explained previously this is well-known to imply that $W$ is an optimal entanglement witness.
(ii): Repeat the argument from (i), now for $\Psi:\mathbb C^{m\times m}\to\mathbb C^{m\times m}$, $\Psi
(X)=\frac{1}{m+1}(X+{\rm tr}(X){\bf1}_m)$ and Thm.~\ref{thm_main_new}~(iv),(b) for $k=m\geq n$. Here one needs that $\Psi$ is self-adjoint which is readily verified.
\end{proof}
\noindent In line with Lemma~\ref{lemma_optimal_family}, this optimality criterion, if satisfied, transfers from one witness to the entire family $\{(X\otimes{\bf1})W(X^\dagger\otimes{\bf1}):X\in\mathbb C^{m\times m}\text{ full rank}\}$ of witnesses because the transformation $W\to(X\otimes{\bf1})W(X^\dagger\otimes{\bf1})$ results in $$W+{\rm tr}_2(W)\otimes{\bf1}\to (X\otimes{\bf1})(W+{\rm tr}_2(W)\otimes{\bf1})(X^\dagger\otimes{\bf1})\,.$$
In particular, the rank of any kernel element of \mbox{$W+{\rm tr}_2(W)\otimes{\bf1}$} is preserved under this transformation because $X$ has full rank.

\begin{remark}\label{rem_kernel_weaker_than_spanning}
\begin{itemize}
\item[(i)]
Because we chose a particular entanglement-breaking channel in Thm.~\ref{thm_main_new}~(iv), unsurprisingly, the corresponding optimality criterion (Thm.~\ref{thm:full_Schmidt_rank_to_optimal}) is 
weaker than the spanning property:
an example of a witness which has the spanning property but does not satisfy our Thm.~\ref{thm:full_Schmidt_rank_to_optimal} is the flip operator in \textnormal{odd} dimensions. For the simplest case of two qutrits, the spanning property of $\mathbb F$ is readily verified,
but any element of \mbox{$\ker(W_{{}^T}+{\rm tr}_2(W_{{}^T})\otimes{\bf1})=\ker(\mathbb F + {\bf 1})$} has the form $(0,a,b,-a,0,c,-b,-c,0)^T$ for some $a,b,c\in\mathbb C$, the Schmidt rank of which---by definition---is the rank of
$$
\begin{pmatrix}
    0&-a&-b\\a&0&-c\\b&c&0
\end{pmatrix}\,.
$$
But this matrix has determinant $0$ meaning its rank cannot be $3$. Therefore $\ker(W_{{}^T}+{\rm tr}_2(W_{{}^T})\otimes{\bf1})$ does not contain any vector of full Schmidt rank.
\item[(ii)] There is an interesting connection between Thm.~\ref{thm:full_Schmidt_rank_to_optimal} and the structural physical approximation (SPA) conjecture. Although the SPA conjecture was disproven \cite{CS14b}, it originally proposed---given a witness $W$---to study separability of \mbox{$({\rm id}\otimes\Psi_D)(W)$} with \mbox{$\Psi_D(X):=(1-p)X+p\,{\rm tr}(X)\frac{\bf1}{n}$} and $p=p(W)$ is chosen to be the smallest value for which \mbox{$({\rm id}\otimes\Psi_D)(W)\geq 0$} \cite{KAB08,CPS09,Bae17}; one always has $p(W)\leq\frac{n}{n+1}$ because $\Psi_D$ is entanglement-breaking for $p=\frac{n}{n+1}$, recall Sec.~\ref{sec_EW_pos}. Notably, SPA re-optimizes $p$ witness-by-witness, whereas we simply fix the universal upper bound $p=\frac{n}{n+1}$ when applying $\Psi_D$ because we, crucially, only require 
$({\rm id}\otimes\Psi_D)(W)$ to be positive semi-definite (not  separable).
In this framework, Thm.~\ref{thm:full_Schmidt_rank_to_optimal} states that if the minimal $p(W)$ for some witness $W$ is indeed $\frac{n}{n+1}$, and \textnormal{if} the corresponding kernel vector has maximal Schmidt rank, \textnormal{then} $W$ is optimal.
\end{itemize}
\end{remark}
Although this remark at first glance suggests otherwise, Thm.~\ref{thm:full_Schmidt_rank_to_optimal} turns out to be quite useful as it establishes optimality for many well-known witnesses.
To substantiate this, as a first example let us  
again look at the ${\sf SWAP}$ operator $\mathbb F$, but now for even dimensions.
Consider
$| \tilde{\Gamma} \rangle = \sum_{\ell = 0}^{n-1} (-1)^\ell |\ell\rangle \otimes |n - 1 - \ell \rangle$
which turns out to be 
a $-1$ eigenstate of $\mathbb F$:
\begin{align*}
    \mathbb F | \tilde{\Gamma} \rangle & 
    = \sum_{\ell = 0}^{n-1} (-1)^\ell \mathbb F\big(|\ell\rangle \otimes |n - 1 - \ell \rangle \big) \\ & 
    = \sum_{\ell = 0}^{n-1} (-1)^\ell |n - 1 - \ell \rangle \otimes |\ell\rangle \\ & 
    = \sum_{\ell' = 0}^{n-1} (-1)^{n - 1 - \ell'} |\ell' \rangle \otimes |n - 1 - \ell'\rangle \\ & 
    = (-1)^{n - 1} \sum_{\ell' = 0}^{n-1} (-1)^{\ell'} |\ell' \rangle \otimes |n - 1 - \ell'\rangle \\ & 
    = \begin{cases}
        | \tilde{\Gamma} \rangle & n \text{ odd} \\
        - | \tilde{\Gamma} \rangle & n \text{ even} \, .
    \end{cases}
\end{align*}
As a consequence, $| \tilde{\Gamma} \rangle$ is
an element of the kernel of \mbox{$W_{{}^T}+{\rm tr}_2(W_{{}^T})\otimes{\bf1}=\mathbb F + {\bf 1}$}. 
Because $| \tilde{\Gamma} \rangle$
has maximal Schmidt rank, 
Thm.~\ref{thm:full_Schmidt_rank_to_optimal} shows that the ${\sf SWAP}$ operator is an optimal entanglement witness whenever
$m=n$ is even.
Therefore, Lemma~\ref{lemma_optimal_family} shows that $(X\otimes{\bf1}){\mathbb F}(X^\dagger\otimes{\bf1})$ is also an optimal entanglement witness in this case for any full rank operator $X$.

The witness from the reduction map is shown to be optimal in a similar way.
Indeed, the bound from Coro.~\ref{coro_1} is reached by the (unnormalized) canonical maximally entangled state $|\Gamma \rangle$.
The corresponding inequality then reads $n\cdot{\bf1}-|\Gamma\rangle\langle\Gamma|\geq 0$ which is tight because applying $|\Gamma\rangle$ to it equals $0$.
The state $|\Gamma \rangle$ (full Schmidt rank) is thus in $\ker(W_{\rm red}+{\rm tr}_2(W_{\rm red})\otimes{\bf1})$
so 
\mbox{$W_{\mathrm{red}} = {\bf 1}_{n} \otimes {\bf 1}_{n} - | \Gamma \rangle \langle \Gamma |$} is optimal.

Before checking more witnesses we make the following observation:
in both previous examples \mbox{$\ker(W+{\rm tr}_2(W)\otimes{\bf1})$} contained not just a vector of full Schmidt rank, but even a maximally entangled state. It turns out that in this case one finds an even simpler sufficient criterion for optimality via expectation values \footnote{
A similar condition could be formulated for the general case of a kernel vector of full Schmidt rank ${\rm vec}(X)$; however, one would have to replace ${\rm tr}(W)$ in~\eqref{eq:coro_2_1} by something like ${\rm tr}(({\bf1}\otimes XX^\dagger)W)$, thus leading to a state-dependent bound which is not as nice to work with anymore. 
}:

\begin{corollary}\label{coro:trace_bound_from_max_entangled}
For all witnesses $W\in\mathbb C^{m\times m}\otimes\mathbb C^{n\times n}$ and all maximally entangled states $\Omega$---i.e., $\Omega$ has Schmidt rank $\min\{m,n\}$ and all its Schmidt coefficients equal $(\min\{m,n\})^{-1/2}$---it holds that
\begin{equation}\label{eq:coro_2_1}
    \langle\Omega|W|\Omega\rangle\geq-\,\frac{{\rm tr}(W)}{\min\{m,n\}}\,.
\end{equation}
Moreover, if equality holds in~\eqref{eq:coro_2_1} for some maximally entangled state $\Omega$, then $W$ is optimal.
\end{corollary}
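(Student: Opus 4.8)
The plan is to obtain both parts as short consequences of Coro.~\ref{coro_1} and Thm.~\ref{thm:full_Schmidt_rank_to_optimal}; the only genuinely new ingredient is a one-line computation of $\langle\Omega|(A\otimes{\bf1})|\Omega\rangle$ for a maximally entangled $\Omega$. Throughout I would assume without loss of generality that $m\le n$, so that $\min\{m,n\}=m$; the case $m\ge n$ is completely symmetric, with ${\rm tr}_1$ in place of ${\rm tr}_2$, the bound $W+{\bf1}\otimes{\rm tr}_1(W)\ge0$ from Coro.~\ref{coro_1} in place of $W+{\rm tr}_2(W)\otimes{\bf1}\ge0$, and part (ii) of Thm.~\ref{thm:full_Schmidt_rank_to_optimal} in place of part (i).

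The first step is the auxiliary identity
\begin{equation*}
\langle\Omega|(A\otimes{\bf1}_n)|\Omega\rangle=\frac{{\rm tr}(A)}{m}\qquad\text{for all }A\in\mathbb C^{m\times m}
\end{equation*}
and all maximally entangled $\Omega\in\mathbb C^m\otimes\mathbb C^n$. To see this, write the Schmidt decomposition of $\Omega$ in the form $|\Omega\rangle=\frac1{\sqrt m}(U\otimes V)|\Gamma_m\rangle$ with $|\Gamma_m\rangle=\sum_{i=0}^{m-1}|i\rangle\otimes|i\rangle$, $U\in\mathbb C^{m\times m}$ unitary and $V\colon\mathbb C^m\to\mathbb C^n$ an isometry, so $V^\dagger V={\bf1}_m$. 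Then $(U\otimes V)^\dagger(A\otimes{\bf1}_n)(U\otimes V)=(U^\dagger AU)\otimes{\bf1}_m$, and using the standard identity $\langle\Gamma_m|(B\otimes{\bf1}_m)|\Gamma_m\rangle={\rm tr}(B)$ together with cyclicity of the trace gives the claim.

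Next I would prove the inequality \eqref{eq:coro_2_1}. By Coro.~\ref{coro_1} we have $W+{\rm tr}_2(W)\otimes{\bf1}_n\ge0$, hence $\langle\Omega|(W+{\rm tr}_2(W)\otimes{\bf1}_n)|\Omega\rangle\ge0$ for any maximally entangled $\Omega$. Applying the auxiliary identity with $A={\rm tr}_2(W)$ and using ${\rm tr}({\rm tr}_2(W))={\rm tr}(W)$, the second term equals $\tfrac{{\rm tr}(W)}{m}$, and rearranging yields exactly $\langle\Omega|W|\Omega\rangle\ge-\tfrac{{\rm tr}(W)}{m}$.

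Finally, for the ``moreover'' part, suppose equality holds in \eqref{eq:coro_2_1} for some maximally entangled $\Omega$. Then $\langle\Omega|(W+{\rm tr}_2(W)\otimes{\bf1}_n)|\Omega\rangle=0$; since $W+{\rm tr}_2(W)\otimes{\bf1}_n\ge0$, this forces $|\Omega\rangle\in\ker(W+{\rm tr}_2(W)\otimes{\bf1}_n)$. As $\Omega$ has Schmidt rank $m=\min\{m,n\}$, Thm.~\ref{thm:full_Schmidt_rank_to_optimal}~(i) immediately gives optimality of $W$. I do not expect a real obstacle here: the computation is elementary and the only point requiring some care is keeping track of which subsystem is the smaller one, so that the correct partial trace, the correct half of Coro.~\ref{coro_1}, and the correct half of Thm.~\ref{thm:full_Schmidt_rank_to_optimal} are invoked consistently.
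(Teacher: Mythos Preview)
Your proof is correct and follows essentially the same route as the paper: both reduce without loss of generality to $m\le n$, expand $\langle\Omega|(W+{\rm tr}_2(W)\otimes{\bf1})|\Omega\rangle$ using the Schmidt form of $\Omega$ to identify the second term as $\tfrac{{\rm tr}(W)}{m}$, invoke Coro.~\ref{coro_1} for the inequality, and then use positivity of $W+{\rm tr}_2(W)\otimes{\bf1}$ together with Thm.~\ref{thm:full_Schmidt_rank_to_optimal}(i) for the equality case. The only cosmetic difference is that you package the Schmidt decomposition via a unitary~$U$ and an isometry~$V$, whereas the paper writes out the orthonormal basis and orthonormal system explicitly.
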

\begin{proof}
Without loss of generality $m\leq n$. Because $\Omega$ is maximally entangled there exist an orthonormal basis $\{u_j\}_{j=1}^m$ of $\mathbb C^m$ and an orthonormal system $\{v_j\}_{j=1}^m$ in $\mathbb C^n$ such that $|\Omega\rangle=\sum_{j=1}^m\frac1{\sqrt m}|u_j\rangle\otimes |v_j\rangle$. This lets us compute
\begin{align*}
    & \langle\Omega|W+{\rm tr}_2(W)\otimes{\bf1}|\Omega\rangle\\
    &\quad =\langle\Omega|W|\Omega\rangle+\frac1m\sum_{j,k}\langle u_j\otimes v_j|{\rm tr}_2(W)\otimes{\bf1}|u_k\otimes v_k\rangle\\
    &\quad =\langle\Omega|W|\Omega\rangle+\frac1m\sum_{j}\langle u_j|{\rm tr}_2(W)|u_j\rangle\\
    &\quad =\langle\Omega|W|\Omega\rangle+\frac{{\rm tr}({\rm tr}_2(W))}{m}=\langle\Omega|W|\Omega\rangle+\frac{{\rm tr}(W)}{m}\,.
\end{align*}
By Coro.~\ref{coro_1} this expression is non-negative so~\eqref{eq:coro_2_1} holds. Also if equality holds in~\eqref{eq:coro_2_1}, then this shows that \mbox{$\langle\Omega|W+{\rm tr}_2(W)\otimes{\bf1}|\Omega\rangle=0$}. Thus---again because \mbox{$W+{\rm tr}_2(W)\otimes{\bf1}\geq 0$} by Coro.~\ref{coro_1}---this implies \mbox{$\Omega\in\ker(W+{\rm tr}_2(W)\otimes{\bf1})$}. Hence $W$ is optimal by Thm.~\ref{thm:full_Schmidt_rank_to_optimal}~(i). The case $m\geq n$ is proven analogously.
\end{proof}
\noindent The bound from Coro.~\ref{coro:trace_bound_from_max_entangled} is tight and, in particular, equality in~\eqref{eq:coro_2_1} for some state $\Omega$ necessitates that $-\,\frac{{\rm tr}(W)}{\min\{m,n\}}$ is an eigenvalue of $W$. Once that is certified, it suffices to check whether the corresponding eigenvector (resp.~whether some eigenvector from the corresponding eigenspace) is maximally entangled.

Now Coro.~\ref{coro:trace_bound_from_max_entangled} can be used to easily show optimality of not just the transpose in even dimensions or the reduction witness (cf.~also~\eqref{eq:proof_of_thm2_1} below), but more generally of the Breuer-Hall witness
$W_{\rm BH} \in \mathbb C^{2n\times 2n} \otimes \mathbb C^{2n\times 2n}$ (recall Eq.~\eqref{eq:BreuerHallWitness})
\begin{equation*}
    W_{\rm BH} = {\bf 1} \otimes {\bf 1} - | \Gamma \rangle \langle \Gamma | - ({\bf1}\otimes U) \mathbb F ({\bf1}\otimes U^\dagger)
\end{equation*}
with $U$ sub-unitary and antisymmetric.
Indeed
\begin{align*}
    {\rm tr}(W_{\rm BH}) & 
    = (2n)^2 - \langle \Gamma | \Gamma \rangle - {\rm tr}(({\bf 1} \otimes U^\dagger U)\mathbb F) \\ & 
    = (2n)^2 - 2n - {\rm tr}(U^\dagger U)\, ,
\end{align*}
resulting in the bound 
$\langle \Omega | W _{\rm BH}| \Omega \rangle \geq 1 + \frac{1}{2n}{\rm tr}(U^\dagger U) - 2n $ 
for any normalized maximally entangled state (Eq.~\eqref{eq:coro_2_1}). 
Expanding the left-hand side
\begin{equation*}
    \langle \Omega | W_{\rm BH} | \Omega \rangle = \langle \Omega | \Omega \rangle - |\langle \Omega | \Gamma \rangle|^2 - \langle \Omega | ({\bf 1} \otimes U) \mathbb F ( {\bf 1} \otimes U^\dagger) | \Omega \rangle \, ,
\end{equation*}
one sees that choosing $| \Omega \rangle = \frac{1}{\sqrt{2n}} | \Gamma \rangle$ results in $\langle \Omega | \Omega \rangle = 1$, $|\langle \Omega | \Gamma \rangle|^2 = 2n$, and 
\begin{equation*}
    \langle \Omega | ({\bf 1} \otimes U) \mathbb F ( {\bf 1} \otimes U^\dagger) | \Omega \rangle 
    = \frac{1}{2n} {\rm tr}(\overline{U} U) 
    = - \frac{1}{2n} {\rm tr}(U^\dagger U) 
\end{equation*}
where in the last step we used that $U$ is antisymmetric. 
Putting everything together, 
\begin{equation*}
    \langle \Omega | W_{\rm BH} | \Omega \rangle = 1 + \frac{1}{2n} {\rm tr}(U^\dagger U) - 2n = -\frac{{\rm tr}(W_{\rm BH})}{2n} \, ,
\end{equation*}
showing that $W_{\rm BH}$ is optimal (by Coro.~\ref{coro:trace_bound_from_max_entangled}).
Beware that optimality of the Choi map (resp.~the corresponding witness~\eqref{eq:Choi_witness}) is not covered by our criteria---and in fact it cannot be covered by any criterion resulting from Thm.~\ref{thm_main_new}---because this witness does not have the spanning property (recall Sec.~\ref{sec_EW_pos}).

\begin{remark}\label{rem_fullyentangledfraction}
At first glance, Eq.~\eqref{eq:coro_2_1} resembles the criterion $\max_\Omega\langle\Omega|A|\Omega\rangle>\frac{1}{\min\{m,n\}}{\rm tr}(A)$ which guarantees that a bipartite operator $A\in\mathbb C^{m\times m}\otimes\mathbb C^{n\times n}$, $A\geq 0$ is entangled \cite{HH99}
(this maximum is also known as ``fully entangled fraction'' \cite{BDSW96}).
However, the fundamental difference between these criteria the same as in Rem.~\ref{remark:operator_inequality_tightness}~(i), because they are derived from two fundamentally different inequalities: our witness operator inequality (Coro.~\ref{coro_1}) and the reduction criterion, respectively. 
\end{remark}

We conclude this section with a simple sufficient (but not necessary) criterion for weak optimality of entanglement witnesses; this follows at once from Rem.~\ref{rem_weak_opt} when applied to the entanglement-breaking channel from Eq.~\eqref{eq:EB_example}:
\begin{corollary} \label{cor:zero_eigval_to_weak_opt}
Given any witness $W\in\mathbb C^{m\times m}\otimes\mathbb C^{n\times n}$, if $0$ is an eigenvalue of $W+{\rm tr}_2(W)\otimes{\bf1}$ or $W+{\bf1}\otimes{\rm tr}_1(W)$, then $W$ is weakly optimal.
\end{corollary}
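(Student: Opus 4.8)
The plan is to specialize the last sentence of Rem.~\ref{rem_weak_opt} to the entanglement-breaking channel from Eq.~\eqref{eq:EB_example}, in exactly the same way Thm.~\ref{thm:full_Schmidt_rank_to_optimal} was obtained. For the first case I would take $\Psi:\mathbb C^{n\times n}\to\mathbb C^{n\times n}$, $\Psi(X)=\frac{1}{n+1}(X+{\rm tr}(X){\bf1}_n)$, which---as already noted in the proof of Thm.~\ref{thm:full_Schmidt_rank_to_optimal}---is an entanglement-breaking channel of full Choi rank since $\mathsf C(\Psi)=\frac{1}{n+1}(|\Gamma\rangle\langle\Gamma|+{\bf1})>0$, and which is moreover self-adjoint, so that $\Psi^\dagger=\Psi$.

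Next, invoking the identity $({\rm id}\otimes\Psi)(W)=\frac{1}{n+1}\big(W+{\rm tr}_2(W)\otimes{\bf1}\big)$ from the proof of Coro.~\ref{coro_1} together with $\Psi=\Psi^\dagger$, one gets $({\rm id}\otimes\Psi^\dagger)(W)=\frac{1}{n+1}\big(W+{\rm tr}_2(W)\otimes{\bf1}\big)$. Scaling by the positive constant $n+1$ does not change the kernel, so $0$ is an eigenvalue of $W+{\rm tr}_2(W)\otimes{\bf1}$ if and only if $0$ is an eigenvalue of $({\rm id}\otimes\Psi^\dagger)(W)$; the last sentence of Rem.~\ref{rem_weak_opt} then yields that $W$ is weakly optimal. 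The second case follows from the identical argument with $\Psi:\mathbb C^{m\times m}\to\mathbb C^{m\times m}$, $\Psi(X)=\frac{1}{m+1}(X+{\rm tr}(X){\bf1}_m)$ acting on the first tensor factor, using $(\Psi\otimes{\rm id})(W)=\frac{1}{m+1}\big(W+{\bf1}\otimes{\rm tr}_1(W)\big)$.

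I do not expect any real obstacle here; all the substance sits inside Rem.~\ref{rem_weak_opt}. The one point worth being careful about is why the full-Choi-rank hypothesis on $\Psi$ is needed at all---it is precisely what excludes the degenerate situation (illustrated in that remark by $W={\bf1}$ and $\Psi={\rm tr}(\cdot)|0\rangle\langle0|$) in which $0$ is an eigenvalue but the associated product state $({\rm id}\otimes\Psi)(|x\rangle\langle x|)=(X^T\otimes{\bf1})\mathsf C(\Psi)(X^T\otimes{\bf1})^\dagger$ vanishes. Since the channel from Eq.~\eqref{eq:EB_example} has $\mathsf C(\Psi)>0$, this product state is automatically nonzero for every nonzero kernel vector $x={\rm vec}(X)$, so the specialization goes through and the corollary is a one-line consequence.
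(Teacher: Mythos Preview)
Your proposal is correct and follows exactly the route the paper takes: the paper states that the corollary ``follows at once from Rem.~\ref{rem_weak_opt} when applied to the entanglement-breaking channel from Eq.~\eqref{eq:EB_example}'', and you have spelled this out precisely, including the self-adjointness and full Choi rank of $\Psi$ and the identities from the proof of Coro.~\ref{coro_1}. The only (harmless) slip is calling $({\rm id}\otimes\Psi)(|x\rangle\langle x|)$ a ``product state''---it is a separable operator, not a product state---but this is purely terminological and does not affect the argument.
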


\subsection{Implications for positive maps and spectra of witnesses}\label{sec_pos}

An obvious question at this point is whether these results on entanglement witnesses lead---via CJ%
---to any new insights on positive maps. 
To answer this let us first translate Coro.~\ref{coro_1}: given $\Phi:\mathbb C^{n\times n}\to\mathbb C^{m\times m}$ linear and positive, setting $W=\mathsf C(\Phi^\dagger)$ and using that ${\rm tr}_1(\mathsf C(\Phi^\dagger))=\Phi^\dagger({\bf1})$ and ${\rm tr}_2(\mathsf C(\Phi^\dagger))=\Phi({\bf1})^T$ (as is readily verified) one finds
    \begin{align*}
    \mathsf C(\Phi^\dagger)+{\bf1}\otimes\Phi^\dagger({\bf1})&\geq 0\\
    \mathsf C(\Phi^\dagger)+\Phi({\bf1})^T\otimes{\bf1}&\geq 0\,.
    \end{align*}
These are new spectral constraints for Choi matrices of positive maps (more on this also at the end of this section). Notably, this shows that for positive maps which are additionally trace-preserving or unital, all eigenvalues of the corresponding Choi matrix are lower bounded by $-1$.

Next, Coro.~\ref{coro:trace_bound_from_max_entangled} places a lower bound on $\langle\Omega|\mathsf C(\Phi^\dagger)|\Omega\rangle$ which---if $m=n$---relates to (a version of) the superoperator trace ${\rm tr}(\Phi)=\sum_{j,k=1}^n\langle j|\Phi(|j\rangle\langle k|)|k\rangle$ of $\Phi$ \cite[Eq.~(6)]{vE24_decomp_findim}.
Notably, this trace connects to the entanglement fidelity $F_e(\Phi,\rho)$ \cite{Schumacher96}, \cite[Def.~3.30]{Watrous18} in that $F_e(\Phi,\rho)={\rm tr}(\Phi(\rho(\cdot)\rho))$ for all $\Phi$ completely positive and all $\rho\geq 0$ (if $\rho$ is the maximally mixed state then this reduces to the channel fidelity $\frac1{n^2}{\rm tr}(\Phi)$ \cite{DKSW07}). 
The known bound for this quantity from spectral considerations is ${\rm tr}(\Phi)\in[-n^2\|\Phi^\dagger({\bf1})\|_\infty,n^2\|\Phi^\dagger({\bf1})\|_\infty]$ \cite[Coro.~2.3.8]{Bhatia07}, where the upper bound is tight. The lower bound, however, can be strengthened further using our new results:
\begin{theorem}\label{thm_2}
    Let $\Phi:\mathbb C^{n\times n}\to\mathbb C^{n\times n}$, $n\in\mathbb N$ be linear and positive. Then the following statements hold:
    \begin{itemize}
        \item[(i)] One has ${\rm tr}(\Phi)\geq -{\rm tr}(\Phi({\bf1}))$ or, equivalently, ${\rm tr}(\Phi)\geq -{\rm tr}(\Phi^\dagger({\bf1}))$.
    Moreover, this bound is tight, i.e., for all $n\in\mathbb N$ there exists $\Phi\neq 0$
    positive such that ${\rm tr}(\Phi)= -{\rm tr}(\Phi({\bf1}))$.
    \item[(ii)] $
        {\rm tr}(\Phi)\geq-n\min\{\|\Phi({\bf1})\|_\infty,\|\Phi^\dagger({\bf1})\|_\infty\}
        $
    \item[(iii)] If $\Phi$ is additionally trace-preserving or unital, then ${\rm tr}(\Phi)\geq -n$.
    \item[(iv)] If there exists $U\in\mathbb C^{n\times n}$ unitary such that ${\rm tr}(\Phi(U^\dagger(\cdot)U))= -{\rm tr}(\Phi({\bf1}))$, then $\mathsf C(\Phi)$ and $\mathsf C(\Phi^\dagger)$ are optimal entanglement witnesses.
    In particular, this conclusion can be drawn if ${\rm tr}(\Phi)= -{\rm tr}(\Phi({\bf1}))$.
    \end{itemize}
\end{theorem}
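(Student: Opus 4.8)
The backbone of the plan is a single identity: for every unitary $U$,
\[
{\rm tr}\big(\Phi(U^\dagger(\cdot)U)\big)=\langle{\rm vec}(U)|\mathsf C(\Phi)|{\rm vec}(U)\rangle ,
\]
which equals $n\,\langle\Omega_U|\mathsf C(\Phi)|\Omega_U\rangle$ for the normalized maximally entangled state $|\Omega_U\rangle:=\tfrac1{\sqrt n}{\rm vec}(U)$ (its Schmidt coefficients are all $n^{-1/2}$). Taking $U={\bf 1}$ recovers ${\rm tr}(\Phi)=\langle\Gamma|\mathsf C(\Phi)|\Gamma\rangle=n\langle\Omega|\mathsf C(\Phi)|\Omega\rangle$ with $|\Omega\rangle=\tfrac1{\sqrt n}|\Gamma\rangle$. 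I will also use ${\rm tr}(\mathsf C(\Phi))={\rm tr}(\Phi({\bf 1}))$; both identities fall out of $\mathsf C(\Phi)=\sum_{j,k}|j\rangle\langle k|\otimes\Phi(|j\rangle\langle k|)$. Since $\Phi$ positive implies $\Phi^\dagger$ positive, $\mathsf C(\Phi)=\mathsf C((\Phi^\dagger)^\dagger)$ is block-positive, so Coro.~\ref{coro_1} and the inequality~\eqref{eq:coro_2_1} of Coro.~\ref{coro:trace_bound_from_max_entangled} apply to $W:=\mathsf C(\Phi)$.

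For (i), applying~\eqref{eq:coro_2_1} with $m=n$, $W=\mathsf C(\Phi)$, and $\Omega=\tfrac1{\sqrt n}|\Gamma\rangle$ turns the left-hand side into $\tfrac1n{\rm tr}(\Phi)$ and the right-hand side into $-\tfrac1n{\rm tr}(\Phi({\bf 1}))$, i.e.\ ${\rm tr}(\Phi)\ge-{\rm tr}(\Phi({\bf 1}))$; the equivalent form uses ${\rm tr}(\Phi({\bf 1}))={\rm tr}(\Phi^\dagger({\bf 1}))$. Tightness comes from the reduction map $\Phi(X)={\rm tr}(X){\bf 1}-X$, where $\Phi({\bf 1})=(n-1){\bf 1}$ and $\mathsf C(\Phi)=W_{\rm red}={\bf 1}\otimes{\bf 1}-|\Gamma\rangle\langle\Gamma|$ give ${\rm tr}(\Phi)=\langle\Gamma|W_{\rm red}|\Gamma\rangle=n-n^2=-{\rm tr}(\Phi({\bf 1}))$. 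Parts (ii) and (iii) are then immediate consequences of (i): ${\bf 1}\ge0$ and positivity of $\Phi,\Phi^\dagger$ make $\Phi({\bf 1}),\Phi^\dagger({\bf 1})$ positive semi-definite, so ${\rm tr}(\Phi({\bf 1}))\le n\|\Phi({\bf 1})\|_\infty$ and likewise for $\Phi^\dagger$, which combined with the two forms of (i) gives (ii); and specializing to $\Phi({\bf 1})={\bf 1}$ (unital) or $\Phi^\dagger({\bf 1})={\bf 1}$ (trace-preserving) gives (iii).

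Part (iv) is the substantive one, and here the general identity from the first paragraph does the work. The hypothesis ${\rm tr}(\Phi(U^\dagger(\cdot)U))=-{\rm tr}(\Phi({\bf 1}))$ translates, via that identity and ${\rm tr}(\Phi({\bf 1}))={\rm tr}(\mathsf C(\Phi))$, into $\langle\Omega_U|\mathsf C(\Phi)|\Omega_U\rangle=-\tfrac1n{\rm tr}(\mathsf C(\Phi))$, which is precisely \emph{equality} in~\eqref{eq:coro_2_1} for $W=\mathsf C(\Phi)$ at the maximally entangled state $\Omega_U$; Coro.~\ref{coro:trace_bound_from_max_entangled} then gives that $\mathsf C(\Phi)$ is optimal. (The same equality also forces $\mathsf C(\Phi)$ to be a genuine witness whenever $\Phi\neq 0$: otherwise Coro.~\ref{coro_1} would entail ${\rm tr}(\Phi({\bf 1}))=0$, hence $\Phi({\bf 1})=0$ and so $\Phi=0$.) For $\mathsf C(\Phi^\dagger)$ I would either run the argument again for $\Phi^\dagger$ --- the hypothesis carries over since ${\rm tr}(\Phi(U^\dagger(\cdot)U))={\rm tr}(\Phi^\dagger(U(\cdot)U^\dagger))$ (a linear map and its Hilbert--Schmidt adjoint share the same trace) and ${\rm tr}(\Phi({\bf 1}))={\rm tr}(\Phi^\dagger({\bf 1}))$ --- or use~\eqref{eq:Choi_adjoint} together with the elementary fact that optimality is preserved under $W\mapsto W^T$ and under conjugation by $\mathbb F$ (both preserve block-positivity and the cone of positive semi-definite operators). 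Setting $U={\bf 1}$ yields the stated ``in particular''.

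The one genuine (if routine) computation is the vectorization identity of the first paragraph: one writes $\mathsf C(\Phi(U^\dagger(\cdot)U))=({\rm id}\otimes\Phi)\big(({\bf 1}\otimes U^\dagger)|\Gamma\rangle\langle\Gamma|({\bf 1}\otimes U)\big)$, uses the ``ricochet'' relation $({\bf 1}\otimes A)|\Gamma\rangle=(A^T\otimes{\bf 1})|\Gamma\rangle$ to move $U$ and $U^\dagger$ onto the first tensor leg, and then re-applies ${\rm tr}(\Psi)=\langle\Gamma|\mathsf C(\Psi)|\Gamma\rangle$; the only thing to be careful about is keeping the transpose/complex-conjugate bookkeeping on $U$ straight and confirming that $\tfrac1{\sqrt n}{\rm vec}(U)$ really is maximally entangled, so that Coro.~\ref{coro:trace_bound_from_max_entangled}'s equality case applies verbatim. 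Everything else is a direct reading-off of Corollaries~\ref{coro_1} and~\ref{coro:trace_bound_from_max_entangled}.
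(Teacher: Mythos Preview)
Your proof is correct and follows essentially the same route as the paper: both derive (i)--(iii) from Coro.~\ref{coro:trace_bound_from_max_entangled} applied to $W=\mathsf C(\Phi)$ via the identities ${\rm tr}(\Phi)=\langle\Gamma|\mathsf C(\Phi)|\Gamma\rangle$ and ${\rm tr}(\mathsf C(\Phi))={\rm tr}(\Phi({\bf1}))$, use the reduction map for tightness, and handle (iv) via the vectorization identity ${\rm tr}(\Phi(U^\dagger(\cdot)U))=\langle{\rm vec}(U)|\mathsf C(\Phi)|{\rm vec}(U)\rangle$ together with the equality case of Coro.~\ref{coro:trace_bound_from_max_entangled}. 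Your treatment of $\mathsf C(\Phi^\dagger)$ (rerunning the argument for $\Phi^\dagger$ using ${\rm tr}(\Phi(U^\dagger(\cdot)U))={\rm tr}(\Phi^\dagger(U(\cdot)U^\dagger))$) matches the paper's, and your parenthetical remark that the equality forces $\mathsf C(\Phi)$ to be a genuine witness for $\Phi\neq 0$ is a small but welcome addition the paper leaves implicit.
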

\begin{proof}
(i): Because $\Phi$ is positive, $\mathsf C(\Phi)$ is block-positive so Coro.~\ref{coro:trace_bound_from_max_entangled} implies
$\frac1n\langle\Gamma|\mathsf C(\Phi)|\Gamma\rangle\geq -\frac1n{\rm tr}(\mathsf C(\Phi))$. Now $\langle\Gamma|\mathsf C(\Phi)|\Gamma\rangle={\rm tr}(\Phi)$ \cite[Lemma~2]{vE24_decomp_findim} as well as \mbox{${\rm tr}(\mathsf C(\Phi))={\rm tr}(\Phi({\bf1}))={\rm tr}(\Phi^\dagger({\bf1}))$} which combines to the desired inequalities.
Finally, equality is achieved, e.g., by the reduction map $R={\rm tr}(\cdot){\bf1}-(\cdot)$:
    because \mbox{$R({\bf1})=(n-1){\bf1}$} one has ${\rm tr}(R({\bf1}))=n^2-n$ and hence
    \begin{align}
        {\rm tr}(R)&=\sum_{j,k=1}^n\big\langle j\big| \big( {\rm tr}(|j\rangle\langle k|){\bf1}- |j\rangle\langle k|\big) \big|k\big\rangle\notag\\
        &=n-n^2=-{\rm tr}(R({\bf1}))\,.\label{eq:proof_of_thm2_1}
    \end{align}
    
(ii): Apply the inequality \mbox{$X\leq\|X\|_\infty{\bf1}$} (which is well known to hold for all $X\in\mathbb C^{n\times n}$ Hermitian) to $\Phi({\bf1}),\Phi^\dagger({\bf1})$ and combine this with~(i).

(iii): Follows from (i) or (ii) using that $\Phi$ is trace-preserving (unital) if and only if $\Phi^\dagger({\bf1})={\bf1}$ ($\Phi({\bf1})={\bf1}$).

(iv): As before the key here is \cite[Lemma~2]{vE24_decomp_findim} which states that ${\rm tr}(\Phi(U^\dagger(\cdot)U))=\langle{\rm vec}(U)|\mathsf C(\Phi)|{\rm vec}(U)\rangle$ for all $U$.
Now if $U$ is unitary, then $|\Omega\rangle:=\frac1{\sqrt n}|{\rm vec}(U)\rangle=\frac1{\sqrt n}({\bf1}\otimes U)|\Gamma\rangle$ is maximally entangled so we know that
\begin{align*}
    \langle\Omega|\mathsf C(\Phi)|\Omega\rangle&=\frac1n{\rm tr}(\Phi(U^\dagger(\cdot)U))=\frac1n{\rm tr}(U^\dagger\Phi(\cdot)U)\\
    &=-\frac1n{\rm tr}(\Phi({\bf1}))=-\frac1n{\rm tr}(\mathsf C(\Phi))\,;
\end{align*}
in the second step we used that the trace is cyclic.
Hence Coro.~\ref{coro:trace_bound_from_max_entangled} guarantees that $\mathsf C(\Phi)$ is optimal. For $\mathsf C(\Phi^\dagger)$ use ${\rm tr}(\Phi)={\rm tr}(\Phi^\dagger)$ (true for all $\Phi$ Hermitian-preserving) so, defining $|\Omega\rangle:=\frac1{\sqrt n}({\bf1}\otimes U^\dagger)|\Gamma\rangle$, one similarly finds
\begin{align*}
    \langle\Omega|\mathsf C(\Phi^\dagger)|\Omega\rangle&=\frac1n{\rm tr}(U\Phi^\dagger(\cdot)U^\dagger)=\frac1n{\rm tr}(\Phi(U^\dagger(\cdot)U))\\
    &=-\frac1n{\rm tr}(\Phi({\bf1}))=-\frac1n{\rm tr}(\Phi^\dagger({\bf1}))
\end{align*}
which is in turn equal to $-\frac1n{\rm tr}(\mathsf C(\Phi^\dagger))$. Hence Coro.~\ref{coro:trace_bound_from_max_entangled} again guarantees optimality of $\mathsf C(\Phi^\dagger)$.
In particular, setting $U={\bf1}$ implies the final statement.
\end{proof}
\begin{remark}\label{rem_optimize_witness_functional}
For the purpose of numerics, what these results imply is that given some $\Phi:\mathbb C^{n\times n}\to\mathbb C^{n\times n}$ positive---resp.~given the corresponding witness $W=\mathsf C(\Phi^\dagger)$---one can implement the optimization problem
\begin{equation}\label{eq:cost_witnessopt}
    \min_U\langle\Gamma|({\bf1}\otimes U^\dagger)W({\bf1}\otimes U)|\Gamma\rangle
\end{equation}
over the $n$-dimensional unitary group, and if this minimum equals $-\frac{1}{n^2}{\rm tr}(W)$, then the witness is optimal.
While this is not sufficient for optimality, this, to our knowledge, is the first functional which can be used to ascertain optimality.
Moreover, \eqref{eq:cost_witnessopt}---or perhaps just $W\mapsto{\rm tr}(W)+n\langle\Gamma|W|\Gamma\rangle$ (Coro.~\ref{coro:trace_bound_from_max_entangled})---may serve as a cost function to optimize a given witness $W$ towards becoming (``more'') optimal, assuming one can suitably encode block-positivity of $W$ into such an optimization.
\end{remark}

With this in mind let us revisit the positive maps behind the witnesses we looked at in Sec.~\ref{sec_opt}. The transpose always leads to optimal witnesses, but satisfies the trace condition from Thm.~\ref{thm_2}~(iv) only for $n$ even (which is in line with Remark~\ref{rem_kernel_weaker_than_spanning}~(i) where we saw that in odd dimensions, our sufficient criterion fails for the transpose):
${\rm tr}(U^\dagger(\cdot)^TU)={\rm tr}(\overline UU)$,
the minimum of which is either $-n$ if $n$ is even, or $-(n-2)$ if $n$ is odd, cf.~Lemma~\ref{lemma_appA_1} in Appendix~\ref{appB}.

The re-scaled reduction map (by $\frac1{n-1}$) as well as the re-scaled Breuer-Hall maps (by $\frac1{2n-2}$, with $U$ \textit{unitary} and anti-symmetric) are positive and trace-preserving and they both have trace $-n$, meaning Thm.~\ref{thm_2}~(iv) implies optimality of the associated witnesses.
This optimality is also reflected in the region of positive, trace-preserving maps from \cite[Eq.(72) \& Lemma 24]{KMS20} which also takes the minimal value $-n$ (for $\alpha=\beta=-\frac{1}{n-1}$ in which case $M=\frac1{n-1}R$ with $R$ the reduction map from above), regardless of the dimension $n$.

However, there is one map which we still have to show optimality for: the generalizations of the Robertson map (\eqref{eq:gen_Rob_1} and \eqref{eq:gen_Rob_2}). It turns out that this is most easily done on the level of the map itself, using the criterion from our previous theorem (yet, we outsourced the proof to Appendix~\ref{appC} because it features some brute-force computation which is not very illuminating):

\begin{corollary}\label{coro_robertson}
    Let $\Phi_{2n}:\mathbb C^{2n\times 2n}\to \mathbb C^{2n\times 2n}$ 
    which maps any input $X$ to 
    \begin{equation}
    \label{eq:gen_Rob}
    \frac1n \begin{pmatrix}
        {\rm tr}(X_{22}){\bf1} & -X_{12}-R(X_{21})\\
        -X_{21}-R(X_{12}) & {\rm tr}(X_{11}){\bf1}
    \end{pmatrix}
    \end{equation}
    be given, where $X_{ij}$ are the corresponding blocks of $X$,
    and the linear map $R: \mathbb C^{n \times n} \to \mathbb C^{n \times n}$ is chosen such that $\Phi_{2n}$ is positive. Then the entanglement witness corresponding to $\Phi_{2n}$ is optimal.
\end{corollary}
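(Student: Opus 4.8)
The plan is to apply Theorem~\ref{thm_2}~(iv): since $\Phi_{2n}$ is positive by hypothesis, it suffices to verify the single trace identity ${\rm tr}(\Phi_{2n})=-{\rm tr}(\Phi_{2n}({\bf 1}))$, after which that theorem (with $U={\bf 1}$) immediately gives optimality of both $\mathsf C(\Phi_{2n})$ and $\mathsf C(\Phi_{2n}^\dagger)$; the latter is precisely the entanglement witness associated with $\Phi_{2n}$. So all the conceptual work is already done, and what remains is a computation.

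First I would evaluate $\Phi_{2n}$ on the identity. Writing ${\bf 1}_{2n}$ in $2\times2$ block form gives $X_{11}=X_{22}={\bf 1}_n$ and $X_{12}=X_{21}=0$, so by~\eqref{eq:gen_Rob}
\begin{equation*}
\Phi_{2n}({\bf 1}_{2n})=\frac1n\begin{pmatrix}n{\bf 1}_n&0\\0&n{\bf 1}_n\end{pmatrix}={\bf 1}_{2n}\,.
\end{equation*}
Hence $\Phi_{2n}$ is unital and ${\rm tr}(\Phi_{2n}({\bf 1}))=2n$; in particular Theorem~\ref{thm_2}~(iii) already tells us ${\rm tr}(\Phi_{2n})\geq -2n$, and the point of the computation below is to show this is saturated.

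Next I would compute ${\rm tr}(\Phi_{2n})=\sum_{j,k=1}^{2n}\langle j|\Phi_{2n}(|j\rangle\langle k|)|k\rangle$ by splitting the index range into the first block $\{1,\dots,n\}$ and the second block $\{n+1,\dots,2n\}$. If $j,k$ lie in the same block, then $|j\rangle\langle k|$ occupies only the $X_{11}$- or $X_{22}$-slot of the input, and~\eqref{eq:gen_Rob} sends such an input to an operator supported on the \emph{opposite} diagonal block; hence $\langle j|\Phi_{2n}(|j\rangle\langle k|)|k\rangle=0$ in that case. If instead $j$ and $k$ lie in different blocks, then $|j\rangle\langle k|$ occupies the $X_{12}$- (resp.\ $X_{21}$-) slot, with the other off-diagonal slot zero, so the relevant off-diagonal block of $\Phi_{2n}(|j\rangle\langle k|)$ equals $-\tfrac1n|j\rangle\langle k|$ --- the term involving $R$ drops out because it only ``sees'' the vanishing slot --- and sandwiching gives $-\tfrac1n$. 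Summing the $n^2$ pairs with $j$ in the first and $k$ in the second block, together with the $n^2$ pairs with the blocks swapped, yields ${\rm tr}(\Phi_{2n})=-2n$. Note this value is independent of the particular choice of $R$, as it must be since only positivity of $\Phi_{2n}$ (not the form of $R$) enters the conclusion.

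Combining the two computations, ${\rm tr}(\Phi_{2n})=-2n=-{\rm tr}(\Phi_{2n}({\bf 1}))$, so Theorem~\ref{thm_2}~(iv) finishes the proof. The only step requiring genuine care is the index bookkeeping in the trace computation; an equivalent and perhaps cleaner route is to compute the Choi matrix $\mathsf C(\Phi_{2n})$ block by block and use ${\rm tr}(\Phi_{2n})=\langle\Gamma|\mathsf C(\Phi_{2n})|\Gamma\rangle$, but either way it is a direct, if slightly tedious, calculation (which is why it is relegated to the appendix), with all the substance carried by Theorem~\ref{thm_2}~(iv).
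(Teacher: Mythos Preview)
Your proposal is correct and follows essentially the same approach as the paper's proof: both verify ${\rm tr}(\Phi_{2n})=-{\rm tr}(\Phi_{2n}({\bf 1}))$ by splitting the trace sum according to the four block sectors, observing that the two ``same-block'' contributions vanish because $\Phi_{2n}$ swaps diagonal blocks, and that each of the $2n^2$ ``cross-block'' terms contributes $-\tfrac1n$ (with the $R$-term dropping out), and then invoke Theorem~\ref{thm_2}~(iv). Your additional remark that $\Phi_{2n}$ is in fact unital is a small bonus not stated in the paper.
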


The generalizations \eqref{eq:gen_Rob_1} and \eqref{eq:gen_Rob_2} of the Robertson map are covered by this corollary when choosing $R(X) = {\rm tr}(X){\bf 1} - X$ and $R(X)=UX^TU^\dagger$, respectively.
Note that the second map is only defined if $n$ itself is even, because antisymmetric unitary matrices only exist in even dimensions (consequence of Lemma~\ref{lemma_appA_1}, Appendix~\ref{appB}). Of course one can relax this to antisymmetric sub-unitary matrices (as done for the generalized Breuer-Hall map) and still get optimality by the previous corollary.

Finally, given that Coro.~\ref{coro:trace_bound_from_max_entangled} is essentially a spectral condition on block-positive matrices (since the trace is the sum of all eigenvalues), and Coro.~\ref{coro_1} imposes similar restrictions, let us quickly recap some known spectral constraints on block-positive operators $W\in\mathbb C^{m\times m}\otimes\mathbb C^{n\times n}$ \cite[Sec.~2.3]{JP18} to see how our results compare:
$W$ has no more than $(m-1)(n-1)$ negative eigenvalues (even in the decomposable case), one has ${\rm tr}(W)^2\geq{\rm tr}(W^2)$, and if $W$ is decomposable then the smallest eigenvalue of $W$ satisfies $\lambda_{\rm min}(W)\geq-\frac12{\rm tr}(W)$.
Moreover, there are some lower bounds on the ratio $\frac{\lambda_{\rm min}(W)}{\lambda_{\rm max}(W)}$ between the smallest and the largest eigenvalue of $W$ \cite[Coro.~5.5]{JK10} (which we will not spell out explicitly here).
To add to these constraints, our results imply the following new bounds:
\begin{corollary}\label{coro_wit_lambda}
    Let $W\in\mathbb C^{m\times m}\otimes\mathbb C^{n\times n}$ be any entanglement witness. The following statements hold:
    \begin{itemize}
        \item[(i)] $\lambda_{\rm min}(W)\geq -{\rm tr}(W)$
        \item[(ii)] All eigenvalues of $W$ are lower bounded by $-\lambda_{\rm max}({\rm tr}_1(W))$ as well as $-\lambda_{\rm max}({\rm tr}_2(W))$.
    \end{itemize}
\end{corollary}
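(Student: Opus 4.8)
The plan is to derive both statements directly from the operator inequalities of Corollary~\ref{coro_1}, combined with elementary facts about spectra of positive semi-definite operators; I expect that no new ideas are required.

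First I would rewrite the two bounds from Corollary~\ref{coro_1} as $W\geq -{\bf 1}\otimes{\rm tr}_1(W)$ and $W\geq -{\rm tr}_2(W)\otimes{\bf 1}$. Since $A\geq B$ implies $\lambda_{\rm min}(A)\geq\lambda_{\rm min}(B)$ (a special case of Weyl's monotonicity theorem), and since the eigenvalues of ${\bf 1}\otimes{\rm tr}_1(W)$ are exactly those of ${\rm tr}_1(W)$ (each with extra multiplicity), so that $\lambda_{\rm min}(-{\bf 1}\otimes{\rm tr}_1(W))=-\lambda_{\rm max}({\rm tr}_1(W))$ and analogously for the other term, one obtains $\lambda_{\rm min}(W)\geq-\lambda_{\rm max}({\rm tr}_1(W))$ and $\lambda_{\rm min}(W)\geq-\lambda_{\rm max}({\rm tr}_2(W))$. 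As $\lambda_{\rm min}(W)$ is by definition the smallest eigenvalue of $W$, bounding it below bounds every eigenvalue below, which is precisely statement~(ii).

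For statement~(i) I would then use that ${\rm tr}_2(W)\geq 0$ for block-positive $W$ (recalled in Sec.~\ref{sec_EW_pos}), so its largest eigenvalue is at most the sum of all its (non-negative) eigenvalues, i.e., $\lambda_{\rm max}({\rm tr}_2(W))\leq{\rm tr}({\rm tr}_2(W))={\rm tr}(W)$. Chaining this with~(ii) gives $\lambda_{\rm min}(W)\geq-\lambda_{\rm max}({\rm tr}_2(W))\geq-{\rm tr}(W)$; one could equally run the argument through ${\rm tr}_1(W)$.

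Since the whole thing reduces to an already-established inequality plus textbook spectral facts, I do not expect a genuine obstacle. The two points to be careful about are: (a) the direction of monotonicity---it is $\lambda_{\rm min}$, not $\lambda_{\rm max}$, that is monotone under the Loewner order; and (b) in~(i) one needs positivity of the \emph{partial trace} (not of $W$ itself) in order to pass from $\lambda_{\rm max}$ to the full trace. Note also that Corollary~\ref{coro:trace_bound_from_max_entangled} alone does not yield~(i), since it only controls expectation values on maximally entangled states while the $\lambda_{\rm min}$-eigenvector of $W$ need not be of that form---this is why the detour through Corollary~\ref{coro_1} is necessary.
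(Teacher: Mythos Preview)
Your proof is correct and follows essentially the same approach as the paper: both rely on Corollary~\ref{coro_1} together with positivity of the partial traces and standard spectral facts. The only cosmetic difference is that you prove~(ii) first and then deduce~(i) from it via $\lambda_{\rm max}({\rm tr}_2(W))\leq{\rm tr}({\rm tr}_2(W))={\rm tr}(W)$, whereas the paper establishes~(i) and~(ii) separately from the same base inequality~\eqref{eq:lowerbound_ev}; the underlying argument is identical.
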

\begin{proof}
The inequality both these results are based on is the following: because $W$ is Hermitian
\begin{align}
\lambda_{\rm min}(W)&=\min_{\|x\|=1}\langle x|W|x\rangle\notag\\
&\geq \min_{\|x\|=1}-\langle x| {\bf 1}\otimes{\rm tr}_1(W) |x\rangle\notag\\
&=-\max_{\|x\|=1}\langle x| {\bf 1}\otimes{\rm tr}_1(W) |x\rangle\label{eq:lowerbound_ev}
\end{align}
where in the second step we used Coro.~\ref{coro_1}.
(i): The idea here is that $ {\rm tr}_1(|x\rangle\langle x|) $ is a state, so it can be upper bounded by ${\bf1}$:
\begin{align*}
\langle x|{\bf 1}\otimes{\rm tr}_1(W)|x\rangle
&={\rm tr}\big( {\rm tr}_1(|x\rangle\langle x|){\rm tr}_1(W) \big)\\
&={\rm tr}\big(({\bf1}\otimes  {\rm tr}_1(|x\rangle\langle x|) )W\big)\leq{\rm tr}(W)\,.
\end{align*}
Combining this with~\eqref{eq:lowerbound_ev}, because of the minus sign, yields (i).
(ii): On the other hand, we could upper bound the expectation value in~\eqref{eq:lowerbound_ev} via the operator norm: 
    \begin{align*}
\langle x|{\bf 1}\otimes{\rm tr}_1(W)|x\rangle
        &\leq \|{\bf 1}\otimes{\rm tr}_1(W)\|_\infty\\
        &=\|{\rm tr}_1(W)\|_\infty=\lambda_{\rm max}({\rm tr}_1(W))
    \end{align*}
   In the last step we used that ${\rm tr}_1(W)\geq 0$.
   For ${\rm tr}_2(W)$ one argues analogously which, altogether, shows (ii).
\end{proof}
Two remarks on these bounds: First, we do not know of an example which shows that $\lambda_{\rm min}(W)\geq -{\rm tr}(W)$ is tight. Indeed, this is just what our results immediately imply, and it could even be that the $-\frac12{\rm tr}(W)$ bound from the decomposable case \cite{Rana13} holds for all witnesses.
Moreover, it is unclear to us how our bound involving the largest eigenvalue of the partial trace of $W$ compares to the known bounds on the ratios between smallest and largest eigenvalue of $W$. The problem here is that the effect of the the partial trace on the largest eigenvalue seems to not exhibit any ``nice'' behavior, beyond trivial bounds like $\lambda_{\rm max}({\rm tr}_1(X))\leq m\|X\|_\infty$.

\section{Conclusions}\label{sec_concl}

Using a new characterization of the spanning property via entanglement-breaking channels, in this paper we found two new, simple sufficient criteria for optimality of bipartite entanglement witnesses: First, if one of the two positive semi-definite operators $W+{\rm tr}_2(W)\otimes{\bf1}$, $W+{\bf1}\otimes{\rm tr}_1(W)$ has a vector of full Schmidt rank in its kernel, and second, if $\langle\Omega|W|\Omega\rangle\geq-\frac{{\rm tr}(W)}{\min\{m,n\}}$ is saturated for some maximally entangled state $|\Omega\rangle$.
In particular, for the purpose of numerics the latter can be turned into a straightforward minimization problem over the unitary group of one of the subsystems.
Moreover, for weak optimality it is sufficient that either $W+{\rm tr}_2(W)\otimes{\bf1}$ or $W+{\bf1}\otimes{\rm tr}_1(W)$ has a zero eigenvalue.

Finally, we translated our findings from witnesses to positive maps to find the following new lower bound on the superoperator trace of a positive map: \mbox{${\rm tr}(\Phi)\geq -{\rm tr}(\Phi({\bf1}))$}. Again, optimality of the associated witness is guaranteed if equality holds.
All these results were interleaved with numerous examples to show how simple, yet versatile our new criteria are for establishing optimality of entanglement witnesses.
Indeed, the only witness with the spanning property we found which our criteria could not identify as optimal was the flip operator in odd dimensions.

This leads to a handful of interesting follow-up questions:
First, is there a ``nice'' entanglement-breaking channel of full Choi rank which ``detects'' optimality of the flip in odd dimensions, i.e., for which $\ker((\Psi\otimes{\rm id})(\mathbb F))$ contains a vector of full Schmidt rank? 
While Thm.~\ref{thm_main_new} guarantees that such a channel \textit{exists}, the question is whether there exists one which can be written down as easily as $X\mapsto X+{\rm tr}(X){\bf1}$.
If so, this may lead to the discovery of new useful classes of entanglement-breaking channels.
Next: is our trace-based criterion (Coro.~\ref{coro:trace_bound_from_max_entangled}) actually equivalent to our kernel criterion (Thm.~\ref{thm:full_Schmidt_rank_to_optimal})?
This is the question mark in Fig.~\ref{fig:overview_diagram}, and while the kernel criterion seems stronger we were not able to find a witness for which only the kernel criterion holds.
Finally, is there any way to generalize our criteria to the multipartite setting? After all, the spanning property is sufficient for optimality for all finite-dimensional systems (the original proof of Lewenstein et al.~\cite{LKCH00} does not rely on there only being two subsystems), yet as soon as we invoked Choi-Jamio\l{}kowski, 
the Schmidt rank, or maximally entangled states we made use of inherently bipartite formalisms.
Thus, while an object like $W+{\bf1}\otimes\ldots\otimes{\bf1}\otimes{\rm tr}_{\neg j}(W)\otimes{\bf1}\otimes\ldots\otimes{\bf1}$ is of course well-defined, it is not even clear whether it is always positive semi-definite whenever $W$ is block-positive, let alone whether its kernel relates to optimality of $W$ in any way. 
While these questions promise valuable insights, their resolution remains a subject for future research.

\appendix
\section{Example: from spanning property to witness-annihilating full-rank separable state}\label{appA}

To better illustrate the first part of the proof of Thm.~\ref{thm_main_new} let us look at an explicit example to see how---starting from the spanning property---one can construct a full-rank separable state on which a given entanglement witness vanishes.
For this we consider the one-qubit transpose map $\Phi={}^T$ so $W=\mathbb F$.
For any orthogonal states $\{ |b_0 \rangle, |b_1 \rangle \} $ one has $\langle b_0 \otimes  b_1| \mathbb F |b_0 \otimes b_1 \rangle=|\langle b_0|b_1\rangle|^2 =0$. 
In particular, this holds for $|0\rangle$ and $|1\rangle$, as well as for $|\pm\rangle := \frac{1}{\sqrt{2}}|0\rangle \pm |1\rangle$ and $|R/L\rangle := \frac{1}{\sqrt{2}}|0\rangle \pm i |1\rangle$. 
Then the set 
$\{|0 1 \rangle, |1 0 \rangle, |\text{+ -}\rangle, |R L \rangle\} = \{(0, 1, 0, 0)^T, (0, 0, 1, 0)^T, \frac{1}{2}(1, -1, 1, -1)^T, \frac{1}{2}(1, -i, i, 1)^T\}$ 
is linearly independent, i.e., $4$-dimensional, and thus $\mathbb F\in\mathbb C^{4\times 4}$ has the spanning property.
From this, one can construct the rank-4 (i.e., full-rank) separable state
\begin{multline*}
    \frac{1}{4} ( |01\rangle\langle 01| + |10\rangle\langle 10| +  |\text{+ -}\rangle\langle \text{+ -}| +  |RL\rangle\langle RL| ) \\
    = \frac{1}{16}\begin{pmatrix}
        2 & -1+i & 1-i & 0 \\
        -1-i & 6 & -2 & 1-i \\
        1+i & -2 & 6 & -1+i \\
        0 & 1+i & -1-i & 2
    \end{pmatrix} =: \rho 
\end{multline*}
which satisfies ${\rm tr}(\rho W)={\rm tr}(\rho \mathbb F) = 0$, since the witness vanishes on each of the terms that makes up $\rho$, individually. 
Indeed, by this argument we could have taken any mixture of these four pure product states (and not just the canonical one, as also done in the proof of Thm.~\ref{thm_main_new}, which was mostly convenience): In fact, we have that ${\rm tr}(\rho_p\mathbb F)=0$ for all probability vectors $p\in\mathbb R^4$ where $\rho_p:=p_1|01\rangle\langle 01| + p_2|10\rangle\langle 10| +  p_3|\text{+ -}\rangle\langle \text{+ -}| +  p_4|RL\rangle\langle RL|$, and $\rho_p$ has full rank if and only if all entries of $p$ are positive.
In other words this construction turns the spanning property into an entire face of the separable states which the witness functional maps to zero.

\section{A bound on ${\rm tr}(\overline UU)$}\label{appB}

\begin{lemma}\label{lemma_appA_1}
    For all $n\in\mathbb N$ and all $U\in\mathsf U(n)$ it holds that $|{\rm tr}(\overline{U}U)|\leq n$.
    Moreover,
    \begin{equation}
        \min_{U\in\mathsf U(n)}{\rm tr}(\overline{U}U)=\begin{cases}
        -n&n\text{ even}\\-(n-2)&n\text{ odd}\,.
    \end{cases}\label{eq:lemma_appA_1_1}
    \end{equation}
\end{lemma}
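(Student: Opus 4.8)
The plan is to first collapse the quantity of interest onto the spectrum of a single unitary matrix, and then read off both claims from elementary spectral bookkeeping. Since the trace is cyclic, ${\rm tr}(\overline U U)={\rm tr}(U\overline U)$, so it suffices to study $M:=U\overline U$, which is a product of two unitaries and hence itself unitary. The first inequality $|{\rm tr}(\overline U U)|\le n$ is then immediate, as $M$ has $n$ eigenvalues all of modulus $1$. (Alternatively, and usefully for the sharpness part below, one can split $U=S+K$ into its symmetric and antisymmetric parts; a short computation using that ${\rm tr}$ annihilates the product of a symmetric and an antisymmetric matrix gives ${\rm tr}(\overline U U)=\|S\|_2^2-\|K\|_2^2$ with $\|S\|_2^2+\|K\|_2^2=\|U\|_2^2=n$.)

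For the minimum in \eqref{eq:lemma_appA_1_1} the key is the eigenvalue structure of $M$. First, ${\rm tr}(M)$ is real, because $\overline{{\rm tr}(M)}={\rm tr}(\overline M)={\rm tr}(\overline U U)={\rm tr}(M)$. Next, $\overline M=\overline U U=U^{-1}MU$, so $M$ and $\overline M$ are (even unitarily) similar; since conjugating a matrix conjugates its spectrum, this forces the spectrum of $M$ to be invariant under complex conjugation, i.e.\ its non-real eigenvalues come in conjugate pairs $e^{\pm i\theta_k}$. Finally, $\det M=|\det U|^2=1>0$; the conjugate pairs contribute positive factors to $\det M$, and the only real eigenvalues a unitary matrix can have are $\pm 1$, so the multiplicity of the eigenvalue $-1$ must be even. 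Writing the spectrum of $M$ as $+1$ with multiplicity $p$, $-1$ with multiplicity $2q$, and the pairs $e^{\pm i\theta_1},\dots,e^{\pm i\theta_r}$, so $p+2q+2r=n$, we obtain
\begin{equation*}
{\rm tr}(\overline U U)={\rm tr}(M)=p-2q+2\sum_{k=1}^r\cos\theta_k\ \ge\ p-2q-2r=2p-n\,.
\end{equation*}

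When $n$ is odd, $p=n-2q-2r$ is odd and hence $p\ge 1$, which yields ${\rm tr}(\overline U U)\ge 2-n=-(n-2)$; when $n$ is even one only gets $\ge -n$, consistent with the first part. To see these bounds are attained, take for $n$ even $U={\bf 1}_{n/2}\otimes\sigma_y$, which is antisymmetric and unitary, so that $U^\dagger=-\overline U$ and therefore $M=U\overline U=-{\bf 1}$, giving ${\rm tr}(\overline U U)=-n$; for $n$ odd take $U=[1]\oplus U'$ with $U'\in\mathsf U(n-1)$ such an antisymmetric unitary, so $M=[1]\oplus(-{\bf 1}_{n-1})$ and ${\rm tr}(\overline U U)=1-(n-1)=2-n$. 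Combining lower bounds and examples proves the claim.

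The one step that is not pure bookkeeping is the assertion that $-1$ occurs with even multiplicity in $M=U\overline U$ and that non-real eigenvalues pair up. This is a classical fact about matrices of the form $A\overline A$ from the theory of consimilarity, but I would prefer to keep the proof self-contained and simply give the short argument indicated above: realness and unimodularity of the eigenvalues, the similarity $M\sim\overline M$, and $\det(U\overline U)=|\det U|^2=1$. I expect everything else to be a direct calculation with no hidden difficulty.
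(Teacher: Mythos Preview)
Your proof is correct and follows essentially the same route as the paper: both arguments hinge on the three facts that $M=U\overline U$ is unitary, that $\det M=|\det U|^2=1$, and that the spectrum of $M$ is closed under complex conjugation, from which the even multiplicity of $-1$ (and hence $p\ge 1$ for $n$ odd) follows; the attaining examples are also identical. The only cosmetic differences are that the paper obtains $|{\rm tr}(\overline U U)|\le n$ via Cauchy--Schwarz rather than via unitarity of $M$, and bounds the odd-case trace by the cruder $1-\sum_{j\ge 2}|\lambda_j|$ instead of your explicit $2p-n$ parametrization.
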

\begin{proof}
    First, the general bound. By Cauchy-Schwarz
    \begin{align*}
        |{\rm tr}(\overline{U}U)|&=|\langle U^T,U\rangle_{\rm HS}|\\
        &\leq\|U^T\|_2\|U\|_2=(\sqrt n)^2=n\,.
    \end{align*}
    If $n$ is even, the lower bound can be achieved: because $\frac n2\in\mathbb N$, ${\bf1}_{\frac n2}\otimes\sigma_y\in\mathsf U(n)$ is well-defined and satisfies
    $$
    {\rm tr}(({\bf1}_{\frac n2}\otimes\overline{\sigma_y})({\bf1}_{\frac n2}\otimes\sigma_y))={\rm tr}({\bf1}_{\frac n2}){\rm tr}(\overline{\sigma_y}\sigma_y)
    =-n\,.
    $$
    For odd $n$ the bound in~\eqref{eq:lemma_appA_1_1} is achieved, e.g., by
$$
\begin{pmatrix}
{\bf1}_{\frac{n-1}2}\otimes\sigma_y&0\\0&1
\end{pmatrix}
$$
as a straightforward computation shows.
To see that one cannot do better in the case where $n$ is odd we make the following observations:
\begin{itemize}
    \item $\overline UU$ is unitary, so all its eigenvalues satisfy $|\lambda_j|=1$
    \item $\det(\overline UU)=|\det U|^2=1$ because $U$ is unitary
    \item The eigenvalues of $\overline UU$ always come in complex conjugate pairs: $\overline UU$ and $U\overline U$ have the same eigenvalues, but---taking the entrywise conjugate---one sees that the eigenvalues of $\overline{\overline UU}=U\overline U$ are the complex conjugates of the eigenvalues of $U\overline U$\medskip
\end{itemize}
We use these facts to show that one of the eigenvalues of $\overline{U}U$ has to be $1$. Group all the eigenvalues of $\overline UU$ into two vectors: $v_0\in\mathbb C^m$, $m\geq 0$ contains all eigenvalues of $\overline UU$ with non-trivial imaginary part, and $v_1\in\{-1,1\}^{n-m}$ contains all other eigenvalues. As seen before the entries of $v_0$ come in complex conjugate pairs, which has two consequences: on the one hand, $m$ must be even---hence $n-m$ is odd---and on the other hand $\prod_{j=1}^{m}(v_0)_j=1$ (because $|(v_0)_j|^2=1$ for all $j$). Therefore
$
1=\det(\overline UU)=\prod_{j=1}^{m}(v_0)_j\cdot\prod_{j=1}^{n-m}(v_1)_j=\prod_{j=1}^{n-m}(v_1)_j$.
As $(v_1)_j=\pm1$ by construction, this can only be true if the number of negative entries of $v_1$ is even.
But $v_1$ has an odd number of entries, so we can conclude that at least one of its entries, i.e., one eigenvalue of $\overline UU$ has to be $1$.
Altogether, this lets us conclude the proof:
\begin{align}
    {\rm tr}(\overline UU)&=1+\sum_{j=2}^{n}\lambda_j(\overline UU)\notag\\
    &\geq 1-\Big|\sum_{j=2}^{n}\lambda_j(\overline UU)\Big|\notag\\
    &\geq 1-\sum_{j=2}^n|\lambda_j(\overline UU)|\notag\\
    &= 1-(n-1)=-(n-2)
    \tag*{\qedhere}
\end{align}
\end{proof}
\section{Proof of Coro.~\ref{coro_robertson}}\label{appC}
\begin{proof}
    By Thm.~\ref{thm_2}~(iv) it suffices to show ${\rm tr}(\Phi) = -{\rm tr}(\Phi({\bf1}))$.
    In the case of maps of the form~\eqref{eq:gen_Rob}, 
    $
        {\rm tr}(\Phi_{2n}({\bf1})) = {\rm tr}({\bf1}_{2n}) = 2n \, .
    $
    On the other hand, we have 
    \begin{align}
        {\rm tr}(\Phi_{2n}) & 
        = \sum_{j, k=1}^{2n} \langle j | \Phi_{2n}(|j \rangle \langle k|) |k\rangle \nonumber \\ & 
        = \sum_{j, k=1}^{2n} {\rm tr} \left((| j \rangle \langle k |)^T \Phi_{2n}(|j \rangle \langle k|) \right) \label{eq:trace_gen_Rob} \, .
    \end{align}
    The sum over $j$ and $k$ can be decomposed into four components, depending on the quadrant in which the matrix unit $| k \rangle \langle j |$ containing the $1$ is:
    \begin{widetext}
    \begin{align*}
        {\rm tr}(\Phi_{2n}) 
        = & \sum_{j, k=1}^{2n} {\rm tr} \left((| j \rangle \langle k |)^T \Phi_{2n}(|j \rangle \langle k|) \right) \\ 
        \begin{split}
        = & \sum_{\zeta, \kappa=1}^{n} {\rm tr} \left(\begin{pmatrix}
            |\kappa \rangle \langle \zeta| & 0 \\ 0 & 0
        \end{pmatrix} 
        \Phi_{2n} \begin{pmatrix}
            |\zeta \rangle \langle \kappa| & 0 \\ 0 & 0
        \end{pmatrix} \right) 
        + \sum_{\zeta, \kappa=1}^{n} {\rm tr} \left(\begin{pmatrix}
            0 & 0 \\ |\kappa \rangle \langle \zeta| & 0
        \end{pmatrix} 
        \Phi_{2n} \begin{pmatrix}
            0 & |\zeta \rangle \langle \kappa| \\ 0 & 0
        \end{pmatrix} \right) \\ 
        & \qquad \qquad \qquad + \sum_{\zeta, \kappa=1}^{n} {\rm tr} \left(\begin{pmatrix}
            0 & |\kappa \rangle \langle \zeta| \\ 0 & 0
        \end{pmatrix} 
        \Phi_{2n} \begin{pmatrix}
            0 & 0 \\ |\zeta \rangle \langle \kappa| & 0
        \end{pmatrix} \right)
        + \sum_{\zeta, \kappa=1}^{n} {\rm tr} \left(\begin{pmatrix}
            0 & 0 \\ 0 & |\kappa \rangle \langle \zeta|
        \end{pmatrix} 
        \Phi_{2n} \begin{pmatrix}
            0 & 0 \\ 0 & |\zeta \rangle \langle \kappa|
        \end{pmatrix} \right)
        \end{split} \\  
        = & \sum_{\zeta, \kappa=1}^{n} {\rm tr} \left(\begin{pmatrix}
            0 & 0 \\ |\kappa \rangle \langle \zeta| & 0
        \end{pmatrix} 
        \begin{pmatrix}
            0 & -|\zeta \rangle \langle \kappa|/n \\ -R(|\zeta \rangle \langle \kappa|)/n & 0
        \end{pmatrix} \right) 
        + \sum_{\zeta, \kappa=1}^{n} {\rm tr} \left(\begin{pmatrix}
            0 & |\kappa \rangle \langle \zeta| \\ 0 & 0
        \end{pmatrix} 
        \begin{pmatrix}
            0 & -R(|\zeta \rangle \langle \kappa|)/n \\ -|\zeta \rangle \langle \kappa|/n & 0
        \end{pmatrix} \right)\\  
        = & \sum_{\zeta, \kappa=1}^{n} {\rm tr} \left(\begin{pmatrix}
            0 & 0 \\ 0 & - |\kappa \rangle \langle \kappa |/n
        \end{pmatrix} \right) 
        + \sum_{\zeta, \kappa=1}^{n} {\rm tr} \left(\begin{pmatrix}
            - |\kappa \rangle \langle \kappa |/n & 0 \\ 0 & 0
        \end{pmatrix} \right) \\ 
        = & 2 \sum_{\zeta, \kappa=1}^{n} \frac{-1}{n} = -2 \frac{n^2}{n} = -2n
        \, .
    \end{align*}
    \end{widetext}

    In the third step we used that the first and the last sum vanish because $\Phi_{2n}$ ``swaps'' the two diagonal blocks meaning the product in the corresponding trace is always zero.
    Altogether, this shows that any map of the form~\eqref{eq:gen_Rob} fulfills condition (iv) from Theorem~\ref{thm_2}, showing that the corresponding entanglement witnesses are optimal.
\end{proof}
${}$
\begin{acknowledgments}

We would like to thank Jens Eisert for insightful discussions and for drawing our attention to key references on the separability problem.
We also thank Joonwoo Bae making us aware of the connection between our results and the SPA conjecture (cf.~Rem.~\ref{rem_kernel_weaker_than_spanning}~(ii)).
We are also grateful to Gregory White for valuable feedback on an earlier version of this manuscript.
FvE is funded by the \textit{Deutsche Forschungsgemeinschaft} (DFG, German Research Foundation) -- project number 384846402, and supported by the Einstein Foundation (Einstein Research Unit on Quantum Devices) and the MATH+ Cluster of Excellence. 
SC is funded by the Deutsche Forschungsgemeinschaft (DFG, German Research Foundation) under Germany´s Excellence Strategy – The Berlin Mathematics Research Center MATH+ (EXC-2046/1, project ID: 390685689).
\end{acknowledgments}

\bibliography{control21vJan20}

\end{document}